\documentclass[11pt]{article}
\usepackage{amsthm}
\bibliographystyle{abbrv}

 \setlength{\topmargin}{-0.5cm} \setlength{\topskip}{0cm}
 \setlength{\footskip}{1cm} \setlength{\headsep}{0cm}
 \setlength{\headheight}{0cm} \setlength{\oddsidemargin}{0.25cm}
 \setlength{\evensidemargin}{0.25cm} \setlength{\textwidth}{16cm}
 \setlength{\textheight}{23cm} \setlength{\parindent}{0.5cm}

\usepackage[unicode]{hyperref}
\usepackage[utf8]{inputenc}
\usepackage{amsmath}
\usepackage{amssymb} 
\usepackage{mathtools}
\usepackage{thmtools} 
\usepackage{thm-restate}
\usepackage{xcolor,xspace}
\usepackage{nicefrac}
\usepackage{subfigure}
\usepackage{tabularx}
\newcolumntype{C}[1]{>{\hsize=#1\hsize\centering\arraybackslash}X}
\usepackage{multirow}
 \usepackage{enumerate}  
\usepackage{framed}

\usepackage[nameinlink,capitalize]{cleveref}

\newtheorem{theorem}{Theorem}[section]
\newtheorem{definition}[theorem]{Definition}
\newtheorem{lemma}[theorem]{Lemma}
\newtheorem{proposition}[theorem]{Proposition}

\usepackage{algpseudocode}
\usepackage{algorithm}

\newcommand{\alg}[1]{\textsc{#1}}

\newcommand{\CONGEST}{\ensuremath{\mathsf{CONGEST}}\xspace}
\newcommand{\LOCAL}{\ensuremath{\mathsf{LOCAL}}\xspace}
\newcommand{\congest}{\CONGEST}
\DeclareMathOperator*{\E}{\mathbb{E}}

\DeclareMathOperator{\poly}{poly}

\newcommand{\bbN}{{\mathbb N}}

\newcommand{\bbR}{{\mathbb R}}

\newcommand{\calC}{{\mathcal C}}
\newcommand{\calF}{{\mathcal F}}

\newcommand{\good}{{\mathrm{good}}}

\newcommand{\knuthupuparrow}{\mathbin{\uparrow\uparrow}}

\DeclarePairedDelimiter\card{\lvert}{\rvert}
\DeclarePairedDelimiter\norm{\lVert}{\rVert}
\DeclarePairedDelimiter\set{\lbrace}{\rbrace}
\DeclarePairedDelimiter\event{[}{]}
\DeclarePairedDelimiter\range{[}{]}

\DeclarePairedDelimiter\parens{\lparen}{\rparen}

\newcommand{\zo}{\{0,1\}}

\begin{document}

\title{Superfast Coloring in CONGEST via Efficient Color Sampling}

\author{Magn\'us M. Halld\'orsson\thanks{ICE-TCS \& Department of Computer Science, Reykjavik University, Iceland. Partially supported by Icelandic Research Fund grant 174484-051.}
\and Alexandre Nolin\footnotemark[1]}

\date{\today}

\maketitle

\begin{abstract}
 We present a procedure for efficiently sampling colors in the {\congest} model. It allows nodes whose number of colors exceeds their number of neighbors by a constant fraction to sample up to $\Theta(\log n)$ semi-random colors unused by their neighbors in $O(1)$ rounds, even in the distance-2 setting. 
 This yields algorithms with $O(\log^* \Delta)$ complexity for different edge-coloring, vertex coloring, and distance-2 coloring problems, matching the best possible.
 In particular, we obtain an $O(\log^* \Delta)$-round {\CONGEST} algorithm for $(1+\epsilon)\Delta$-edge coloring when $\Delta \ge \log^{1+1/\log^*n} n$, and a poly($\log\log n$)-round algorithm for $(2\Delta-1)$-edge coloring in general.
The sampling procedure is inspired by a seminal result of Newman in communication complexity.
\end{abstract}

\section{Introduction}

The two primary models of locality, \LOCAL and \CONGEST, share most of the same features: the nodes are connected in the form of an undirected graph, time proceeds in synchronous rounds, and in each round, each node can exchange different messages with each of its neighbors. The difference is that the messages can be of arbitrary size in \LOCAL, but only logarithmic in \CONGEST. A question of major current interest is to what extent message sizes matter in order to achieve fast execution.

Random sampling is an important and powerful principle with extensive applications to distributed algorithms. In its basic form, the nodes of the network compute their random samples and share it with their neighbors in order to reach collaborative decisions. When the samples are too large to fit in a single \CONGEST message, then the \LOCAL model seems to have a clear advantage. The goal of this work is to overcome this handicap and derive equally efficient \CONGEST algorithms, particularly in the context of coloring problems.

Graph coloring is one of the most fundamental topics in distributed computing. In fact, it was the subject of the first work on distributed graph algorithms by Linial \cite{linial92}. The task is to either color the \emph{vertices} or the \emph{edges} of the underlying communication graph $G$ so that adjacent vertices/edges receive different colors. The most basic distributed coloring question is to match what is achieved by a simple centralized algorithm that colors the vertices/edges in an arbitrary order. Thus, our primary focus is on the $(\Delta+1)$-vertex coloring and the $(2\Delta-1)$-edge coloring problems, where $\Delta$ is the maximum degree of $G$.

Randomized distributed coloring algorithms are generally based on sampling colors from the appropriate domain. The classical and early algorithms for vertex coloring, e.g.\ \cite{johansson99,BEPS16}, involve sampling individual colors and operate therefore equally well in \CONGEST. The more recent fast coloring algorithms, both for vertex \cite{SW10,EPS15,HSS16,CLP20} and edge coloring \cite{EPS15}, all involve a technique of Schneider and Wattenhofer \cite{SW10} that uses samples of up to logarithmic number of colors. In fact, there are no published sublogarithmic algorithms (in $n$ or $\Delta$) for these coloring problems in \CONGEST, while there are now $\poly(\log\log n)$-round algorithms \cite{EPS15,CLP20,GGR20} in \LOCAL. A case in point is the $(2\Delta-1)$-edge coloring problem when $\Delta = \log^{1+\Omega(1)} n$, which can be solved in only $O(\log^* n)$ \LOCAL rounds \cite{EPS15}. The bottleneck in \CONGEST is the sampling size of the Schneider-Wattenhofer protocol. 

We present here a technique for sampling a logarithmic number of colors and communicating them in only $O(1)$ \CONGEST rounds. 
We apply the technique to a number of coloring problems, allowing us to match in \CONGEST the best complexity known in \LOCAL.

The sampling technique is best viewed as making random choices with a limited amount of randomness.
This is achieved by showing that sampling within an appropriate subfamily of all color samples can retain some of the useful statistical properties of a fully random sample.
It is inspired by Newman's theorem in communication complexity \cite{Newman91}, where
dependence on shared randomness is removed through a similar argument.

We apply the sampling technique to a number of coloring problems where the nodes/edges to be colored have a large \emph{slack}: the number of colors available exceeds by a constant fraction the number of neighbors. 
We particularly apply the technique to settings where the maximum degree $\Delta$ is superlogarithmic (we shall assume $\Delta=\Omega(\log^{1+1/\log^*n} n)$).

We obtain a superfast $O(\log^* \Delta)$-round algorithm for $(2\Delta-1)$-edge coloring when $\Delta=\Omega(\log^{1+1/\log^*n} n)$.
Independent of $\Delta$, we obtain a $\poly(\log\log n)$-round algorithm.
This shows that coloring need not be any slower in {\CONGEST} than in {\LOCAL}.

We obtain similar results for vertex coloring, for the same values of $\Delta$ ($\Delta=\Omega(\log^{1+1/\log^*n} n)$).
We obtain an $O(\log^* \Delta)$-round algorithm for $(1+\epsilon)\Delta$-coloring, for any $\epsilon > 0$.
For graphs that are locally sparse (see Sec.~2 for definition), this gives a $(\Delta+1)$-coloring in the same time complexity.
Matching results also hold for the \emph{distance-2} coloring problem, where nodes within distance 2 must receive different colors.

\subsection{Related Work}
The literature on distributed coloring is vast and we limit this discussion to work that is directly relevant to ours, primarily randomized algorithms.

An edge coloring of a graph $G$ corresponds to a vertex coloring of its line graph, whose maximum degree is $2\Delta(G)-2$. Therefore, \LOCAL algorithms for $(\Delta+1)$-vertex coloring yield $(2\Delta-1)$-edge coloring in the same time. Since line graphs have a special structure, edge coloring often allows for either faster algorithms or fewer number of colors.
For \CONGEST, the situation is different:
Because of capacity restrictions, no single node can expect to learn the colors of all edges adjacent to a given edge.
In fact, there are no published results on efficient edge-coloring algorithms in \CONGEST, to the best of our knowledge\footnote{Fischer, Ghaffari and Kuhn \cite{FGK17} suggest in a footnote that their edge coloring algorithms, described and proven in \LOCAL, actually work in \CONGEST. It does not hold for their randomized edge-coloring result, which applies the algorithm of \cite{EPS15}.}. 

A classical simple (probably folklore) algorithm for vertex coloring is for each vertex to pick in each round a color uniformly at random from its \emph{current palette}, the colors that are not used on neighbors. Each node can be shown to become colored in each round with constant probability and thus this procedure completes in $O(\log n)$ rounds, w.h.p.\ \cite{johansson99}. 
In fact, each round of this procedure reduces w.h.p.\ the \emph{uncolored degree} of each vertex by a constant factor, as long as the degree is $\Omega(\log n)$~\cite{BEPS16}. 
Within $O(\log \Delta)$ rounds we are then in the setting where the maximum uncolored degree of each node is logarithmic. 
This algorithm works also in \CONGEST for node coloring, and as well for edge coloring in \LOCAL, but 
does not immediately work for edge coloring in \CONGEST, since it is not clear how to select a color uniformly at random from the palette of an edge.

Color sampling algorithms along a similar vein have also been studied for edge coloring \cite{PS97,GP97,DGP98}, all running in $O(\log n)$ \LOCAL rounds in general. Panconesi and Srinivasan \cite{PS97} showed that one of the most basic algorithms finds a $(1.6\Delta+\log^{2+\Omega(1)} n)$-edge coloring. 
Grable and Panconesi \cite{GP97} showed that $O(\log\log n)$ rounds suffice when $\Delta = n^{\Omega(1/\log \log n))}$.
Dubhashi, Grable and Panconesi \cite{DGP98} proposed an algorithm based on the R\"odl nibble technique, where only a subset of the edges try a color in each round, and showed that it finds a $(1+\epsilon)\Delta$-edge coloring, when $\Delta = \omega(\log n)$.

Sublogarithmic round vertex coloring algorithms have two phases, where the first phase is completed once the \emph{uncolored degree} of the nodes is low (logarithmic or polylogarithmic). 
Barenboim et al.\ \cite{BEPS16} showed that within $O(\log\log n)$ additional rounds, the graph is \emph{shattered}: each connected component (induced by the uncolored nodes) is of polylogarithmic size.
The default approach is then to apply fast deterministic algorithms. With recent progress on network decomposition  \cite{RG19,GGR20}, as well as fast deterministic coloring algorithms \cite{GK20}, the low degree case can now be solved in $\poly(\log\log n)$ rounds.

Recent years have seen fast \LOCAL coloring algorithms that run in sublogarithmic time. These methods depend crucially on a random sampling method of Schneider and Wattenhofer \cite{SW10} where each node picks as many as $\log n$ colors at a time.
The method works when each node has large \emph{slack}; i.e., when the number of colors in the node's palette is a constant fraction larger than the number of neighbors (competing for those colors). This holds in particular when computing a $(1+\epsilon)\Delta$-coloring, for some $\epsilon > 0$, which they achieve in $O(\log^* \Delta)$ rounds, when $\Delta \ge \log^{1+\Omega(1)} n$.

In the $(\Delta+1)$-node coloring and $(2\Delta-1)$-edge coloring problems, the nodes do not have any slack \emph{a priori}. 
It turns out that such slack can sometimes be generated by a single round of color guessing. Suppose the graph is triangle free, or more generally, \emph{locally sparse}, meaning that the induced subgraph of each node has many non-adjacent pairs of nodes. Then, when each node tries random color, each pair of non-adjacent common neighbors of $v$ has a fair chance of being colored with the same color, which leads to an increase in the slack of $v$. As shown by Elkin, Pettie and Su \cite{EPS15} (with a longer history in graph theory, tracing back at least to Reed \cite{Reed98}), locally sparse graphs will have slack $\Omega(\Delta)$ after this single color sampling round.
Line graphs are locally sparse graphs, and thus we obtain this way a $O(\log^* \Delta)$-round algorithm for $(2\Delta-1)$-edge coloring \cite{EPS15}, for $\Delta \ge \Delta^{1+\Omega(1)}$. They further obtain a $(1+\epsilon)\Delta$-edge list coloring in the same time frame, using the nibble technique of \cite{PS97}.

This fast coloring of locally sparse graphs is also useful in $(\Delta+1)$-vertex coloring. Both the first sublogarithmic round algorithm of Harris, Schneider, Su \cite{HSS16} and the current fastest algorithm of Chang, Li, and Pettie \cite{CLP20} partition the graph into a sparse and a dense part, color the sparse part with a variation of the method of \cite{SW10}, and synchronize the communication within each cluster of the dense part to achieve fast coloring.

A \emph{distance-2 coloring} is a vertex coloring such that nodes within distance at most 2 receive different colors. This problem in \CONGEST shares a key property with edge coloring: nodes cannot obtain a full knowledge of their available palette, but they can try a color by asking their neighbors. 
A recent $(\Delta^2+1)$-distance-2 coloring algorithm of \cite{HKMN20_fullversion} that runs in $O(\log \Delta) + \poly(\log\log n)$ \CONGEST rounds can be used to compute $(2\Delta-1)$-edge colorings in the same time complexity.

\section{Intuition and preliminaries}
\label{sec:intuition-preliminaries}

Existing $O(\log^* \Delta)$ algorithms for the different coloring problems in {\LOCAL} such as those by Schneider and Wattenhofer~\cite{SW10} all involve sampling several colors in a single round. In such algorithms, the nodes try colors in a way that guarantees each color an independent, $\Omega(1)$ probability of success. While this probability of success is a given when all nodes try a single color, having each node try several colors in any given round could create more conflicts between colors and reduce the probability of success of any given one.

This issue is usually solved using 
\emph{slack}, the difference between the number of colors unused by the neighbors of a node and how many of its neighbors are still uncolored. Put another way, slack is the number of colors that is guaranteed to be left untouched by your neighbors for all possible choices of your currently uncolored neighbors. Slack is a given when we allow more colors than each node has neighbors, and is otherwise easily generated in a locally sparse graph.

If the nodes are all able to try $\Theta(\log n)$ colors in $O(1)$ rounds, and all colors have an independent, $\Omega(1)$ probability of success, $O(1)$ rounds suffice to color all nodes w.h.p. However, this is usually not immediately possible, unless all nodes have a large amount of slack from the beginning.
The $O(\log^* n)$ algorithms work through increasing the ratio of slack to uncolored degree, trying more and more colors as this ratio increases, allowing nodes to try $\Theta(\log n)$ colors each with constant probability over the course of $O(\log^* n)$ rounds. The speed comes from the fact that slack never decreases but the uncolored degree of the edges decreases with exponentially increasing speed as the nodes try more and more colors.

However, all these algorithms have nodes send $\Theta(\log n)$ colors during the algorithm's execution, which requires $\Theta(\log n \cdot \log \Delta)$ bits, i.e., a minimum of $\Theta(\log \Delta)$ {\congest} rounds. Our algorithms will also involve having each node try up to $\Theta(\log n)$ colors, but without transmitting $\Theta(\log n)$ arbitrary colors.

\subsection{Sampling colors with shared randomness}

While $\Theta(\log n \cdot \log \Delta)$ bits are needed to describe an arbitrary choice of $\Theta(\log n)$ colors in a color space of size $\Theta(\Delta)$, being able to describe any choice of $\Theta(\log n)$ colors can be unnecessary. To get intuition about this, consider the setting where all nodes have access to a shared source of randomness. When trying random colors, the edges do not care about which specific set of colors they are trying, all that matters is that the colors they try are random and independent of what other nodes are trying.

With a shared source of randomness, instead of sending $\log \Delta$ bits to specify a color, a node can use the shared random source as a source of random colors and send indices of colors in the random source.
If each random color has a chance $\geq p$ of having the properties needed to be tried, the index of the first satisfactory color will be of expected value $O(1/p)$ and only take $O(\log(1/p))$ bits to communicate. The nodes can also use $O(\log n)$ bits to indicate which of the first $O(\log n)$ colors in the random source they find satisfactory and decide to try.
This technique allows the edges to sample $\Theta(p \log n)$ colors in a single round of \CONGEST. The choices made by nodes are made independent 
by having the nodes use disjoint parts of the shared randomness (for example, each node might only use the bits at indices equal to its ID modulo $n$).
{This type of saving in the communication based on a shared source of randomness appears in several places in communication complexity, in particular in~\cite{HastadW07} where it is used with the Disjointness problem, and in the folklore protocol for Equality (e.g., Example~3.13 in~\cite{KN97}).}

It is crucial in the above argument that all nodes have access to a shared source of randomness, as messages making references to the shared randomness lose their meaning without it.
Our goal will now be to remove this need for a shared source of randomness, taking inspiration from Newman's Theorem in communication complexity~\cite{Newman91} (Theorem 3.14 in~\cite{KN97}, Theorem 3.5 in~\cite{RY2020}). 
It is not an application of it, however, as contrary to the $2$-party communication complexity setting, distributing a common random seed to all parties would require many rounds in our context, and the success of any node trying one or more colors is interrelated with the random choices of up to $\Delta+1$ parties. Our contribution is best understood as replacing a fully random sample of colors by a pseudorandom one with appropriate statistical guarantees, whose proof of existence resembles the proof of Newman's Theorem.
We do so in 
Section~\ref{sec:trick}, 
and give multiple applications of this result in subsequent sections.

\subsection{Tools and notation}

Our results rely heavily on the existence of a family of sets with the right properties, whose existence we prove by a probabilistic argument. We make frequent use of the Chernoff-Hoeffding bounds in this proof, as well as in other parts of the paper. We use a version of the bounds that holds for \emph{negatively associated} random variables.

\begin{definition}[Negative association]
\label{def:negative-association}
The random variables $X_1,\ldots,$ $X_n$ are said to be \emph{negatively associated} if for all disjoint subsets $I,J \subseteq [n]$ and all non-decreasing functions $f$ and~$g$,
\[\E\event{f(X_i,i\in I) \cdot g(X_j,j\in J)} \leq \E\event{f(X_i,i\in I)} \cdot \E\event{g(X_j,j\in J)}\]
\end{definition}

\begin{lemma}[Chernoff-Hoeffding bounds]
\label{lem:chernoff-hoeffding}
Let $X_1,\ldots,X_n$ be $n$ negatively associated random variables in $[0,1]$, $X := \sum_{i=1}^n X_i$ their sum, and let the expectation of $X$ satisfy $\mu_L \leq \E[X] \leq \mu_H$. For $0 < \epsilon < 1$:
\begin{align}
    \Pr\event{X > (1+\epsilon)\mu_H} \leq \exp \left ( - \frac {\epsilon^2} 3 \mu_H \right ),\\
    \Pr\event{X < (1-\epsilon)\mu_L} \leq \exp \left ( - \frac {\epsilon^2} 2 \mu_L \right ).
\end{align}
\end{lemma}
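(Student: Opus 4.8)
The plan is to run the classical moment-generating-function argument (Bernstein/Chernoff) and observe that the only place independence is used — the factorization of the MGF of a sum into a product of MGFs — can be replaced by an \emph{inequality} in the right direction that follows directly from Definition~\ref{def:negative-association}. Concretely, I will bound $\Pr[X > a]$ (resp. $\Pr[X < a]$) via Markov's inequality applied to $e^{tX}$ with $t>0$ (resp. $e^{-sX}$ with $s>0$), control $\E\event{e^{tX}}$ using negative association plus the boundedness $X_i \in [0,1]$, and then optimize over the free parameter exactly as in the independent case.

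For the upper tail, Markov gives $\Pr\event{X > (1+\epsilon)\mu_H} \le e^{-t(1+\epsilon)\mu_H}\,\E\event{e^{tX}}$ for any $t>0$. Since $x \mapsto e^{tx}$ is non-negative and non-decreasing, a product of such functions over any index subset is again non-negative and non-decreasing, so peeling off one factor at a time (induction on $n$, each step applying Definition~\ref{def:negative-association} with $I=\{1\}$ and $J=\{2,\dots,n\}$) yields $\E\event{\prod_i e^{tX_i}} \le \prod_i \E\event{e^{tX_i}}$. Each single-variable MGF is handled by convexity of the exponential on $[0,1]$: $e^{tX_i} \le 1 - X_i + X_i e^t$, hence $\E\event{e^{tX_i}} \le 1 + p_i(e^t-1) \le \exp(p_i(e^t-1))$ with $p_i := \E[X_i]$. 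Multiplying and using $\sum_i p_i = \E[X] \le \mu_H$ (valid since $e^t-1>0$) gives $\E\event{e^{tX}} \le \exp((e^t-1)\mu_H)$, so $\Pr\event{X > (1+\epsilon)\mu_H} \le \exp\big((e^t - 1 - t(1+\epsilon))\mu_H\big)$; the choice $t = \ln(1+\epsilon)$ minimizes the exponent and produces $\big(e^{\epsilon}/(1+\epsilon)^{1+\epsilon}\big)^{\mu_H}$, after which the elementary inequality $(1+\epsilon)\ln(1+\epsilon) - \epsilon \ge \epsilon^2/3$ for $0<\epsilon<1$ finishes the first bound.

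The lower tail is symmetric once one has the ``reversed'' form of negative association, which I would establish first: if $f,g$ are non-increasing and bounded by $M$, then $M-f$ and $M-g$ are non-negative and non-decreasing, so expanding $\E\event{(M-f(X_I))(M-g(X_J))} \le \E\event{M-f(X_I)}\,\E\event{M-g(X_J)}$ and cancelling the common linear terms gives $\E\event{f(X_I)g(X_J)} \le \E\event{f(X_I)}\,\E\event{g(X_J)}$. Applying this to $x \mapsto e^{-sx}$ (non-increasing for $s>0$) and the same peeling argument yields $\E\event{e^{-sX}} \le \prod_i \E\event{e^{-sX_i}} \le \exp((e^{-s}-1)\mu_L)$, now using $\sum_i p_i \ge \mu_L$ together with $e^{-s}-1<0$. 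Then $\Pr\event{X < (1-\epsilon)\mu_L} \le e^{s(1-\epsilon)\mu_L}\,\E\event{e^{-sX}} \le \exp\big((s(1-\epsilon) + e^{-s}-1)\mu_L\big)$, and $s = -\ln(1-\epsilon)>0$ together with $-\epsilon - (1-\epsilon)\ln(1-\epsilon) \le -\epsilon^2/2$ for $0<\epsilon<1$ completes the proof.

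The only step that is not entirely routine is the MGF factorization, and within it the lower-tail case, where the relevant one-variable functions are decreasing rather than increasing; once the reversed form of negative association above is in hand, everything else is the standard textbook computation. I would also note that this concentration statement is available in the literature on negatively associated random variables, so one could simply cite it, but the short self-contained derivation above keeps the paper independent of external concentration bounds.
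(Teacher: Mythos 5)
Your proof is correct. Note first that the paper itself does not prove this lemma: it is stated as a standard tool, with the surrounding text pointing to \cite{DubhashiRanjan98,DP09}, where exactly this Chernoff--Hoeffding extension to negatively associated variables is established. Your derivation is the standard one from that literature: Markov on $e^{tX}$, factorization of the moment generating function via the negative-association inequality applied inductively (peeling off one coordinate at a time), the convexity bound $e^{tx}\le 1-x+xe^t$ on $[0,1]$, and the usual optimization of $t$ followed by $(1+\epsilon)\ln(1+\epsilon)-\epsilon\ge\epsilon^2/3$ and $(1-\epsilon)\ln(1-\epsilon)+\epsilon\ge\epsilon^2/2$ for $0<\epsilon<1$. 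You correctly handle the two points where care is needed: the reversed form of \cref{def:negative-association} for non-increasing functions (needed for $e^{-sX}$ in the lower tail), which you derive cleanly by passing to $M-f$ and $M-g$; and the substitution of $\mu_H$ (resp.\ $\mu_L$) for $\E[X]$, which is valid only because $e^t-1>0$ (resp.\ $e^{-s}-1<0$), a sign condition you state explicitly. The one step you use silently is that any sub-collection of negatively associated variables is again negatively associated (needed for the induction), but this is immediate from the definition since disjoint subsets of $\{2,\dots,n\}$ are disjoint subsets of $[n]$. Citing the result, as you suggest at the end and as the paper does, would be the shorter route; your self-contained argument is a valid alternative.
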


Negative association is a somewhat complicated-looking property but the property holds in simple scenarios. In particular it holds for balls and bins experiments~\cite{DubhashiRanjan98,DP09}, such as when the random variables $X_1,\ldots,X_n$ correspond to sampling $k$ elements out of $n$ (i.e., when the random variables satisfy $\Pr\event{X_i = v_i, \forall i \in [n]} = 1/\binom{n}{k}$ for all $v\in \zo^n, \norm{v}_1=k$). It also encompasses the usual setting where $X_1,\ldots,X_n$ are independent.

For ease of notation, we will use the shorthand $[a,b]k$ to denote the interval $[a\cdot k,b\cdot k]$, $[a..b]$ to denote the set $\set{a,\ldots,b}$, and $[k]$ to denote the set $\set{1,\ldots,k}$.

Throughout the paper we describe algorithms that try an increasing number of colors in a single round. This increase is much faster than exponential and we use Knuth's up-arrow notation to denote it. In fact, the increase is as fast as the inverse of $\log^*$, which already gives a sense of why our algorithms run in $O(\log^* n)$ rounds.

\begin{definition}[Knuth's up-arrow notation for tetration]
    For $a\in \bbR, b\in \bbN$, $a \knuthupuparrow b$ represents the \emph{tetration} or \emph{iterated exponentiation} of $a$ by $b$, defined as:
    \[a \knuthupuparrow b = \begin{cases}1 & \textrm{if }b=0\\a^{a \knuthupuparrow (b-1)} & \textrm{otherwise}\end{cases}\]
\end{definition}
Throughout the paper, as we work on a graph $G(V,E)$ of vertices $V$ and edges $E$, we denote by $n$ the number of vertices and by $\Delta$ the maximum degree of the graph. The degree of a vertex is denoted by $d(v)$, its uncolored degree (how many of its neighbors are uncolored) by $d^*(v)$. The sparsity of $v$ (\cref{def:sparsity}) is denoted by $\zeta(v)$, the palette of $v$ (the set of colors not yet used by one of $v$'s neighbors) by $\psi_v$, and its slack $s(v)$ is defined as $s(v)=\card{\psi_v}-d^*(v)$. Whenever we consider an edge-coloring problem, we will often work on the line graph and add an $L$ subscript to indicate that we consider the same quantities but on $L(G)$: the maximum degree of this graph is $\Delta_L=2\Delta-1$, the degree of an edge is denoted by $d_L(e)$, and so on.

\begin{definition}[Sparsity]
\label{def:sparsity}
    Let $v$ be a node in the graph $G(V,E)$ of maximum degree $\Delta$, and let $E[N(v)]$ the set of edges between nodes of $v$'s neighborhood $N(v)$. The sparsity of $v$ is defined as:
    \[\zeta(v) = \frac 1 {\Delta} \cdot \left ( \binom{\Delta}{2} - \card*{E[N(v)]} \right ) \]
\end{definition}

The sparsity is a measure of how many edges are missing out of all the edges that could exist in the neighborhood of a node. As immediate property, $\zeta(v)$ is a rational number in the range $[0,(\Delta-1)/2]$. A value close to $0$ indicates a very dense neighborhood (a value of exactly $0$ indicates that $v$'s neighbors form a clique of $\Delta$ nodes) while a value close to $(\Delta-1)/2$ indicates the opposite, that $v$'s neighborhood is sparse (a value of $(\Delta-1)/2$ means that no two neighbors of $v$ are connected to one another). A graph is said to be \emph{$(1-\epsilon)$-locally sparse} iff its vertices are all of sparsity at least $\epsilon\Delta$. A vertex $v$ of sparsity $\zeta$ is equivalently said to be $\zeta$-sparse.

Sparsity is of interest here for two reasons: first, because we know from a result of~\cite{EPS15} that nodes receive slack proportional to their sparsity w.h.p.\ in just one round of all nodes trying a random color if $\zeta(v) \in \Omega(\log n)$ (\cref{prop:slack-lemma}), and second because the line graph is sparse by construction (\cref{prop:line-graph-sparsity}), and therefore generating slack in it follows directly from \cref{prop:slack-lemma}.

\begin{proposition}[\cite{EPS15}, Lemma 3.1]
    \label{prop:slack-lemma}
    Let $v$ be a vertex of sparsity $\zeta$ and let $Z$ be the slack of $v$ after trying a single random color. Then, 
    \[\Pr[Z \le \zeta/(4 e^3)] \le e^{-\Omega(\zeta)}.\]
\end{proposition}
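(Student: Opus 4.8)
The plan is to fix a vertex $v$ of sparsity $\zeta$ and analyze the slack $Z = \card{\psi_v} - d^*(v)$ after one round in which every vertex picks a uniformly random color from $[\Delta+1]$. The key observation is that slack is created whenever two \emph{non-adjacent} neighbors $u,w \in N(v)$ pick the same color and that color is successfully retained by both (i.e., neither has a neighbor that also picked it), because then that color survives in $\psi_v$ while only "costing" one slot in $d^*(v)$ — effectively one unit of slack is generated per such coincidence, up to the caveat of double-counting when the same color is hit by several non-adjacent pairs. So the first step is to identify a large collection of disjoint non-adjacent pairs in $N(v)$: since $v$ is $\zeta$-sparse, the complement graph on $N(v)$ has at least $\Delta\zeta$ edges, and a maximal matching in it has size $\ge \zeta\cdot\Delta / (2\Delta) \cdot c$ for an appropriate constant — more carefully, a maximal matching $M$ in the complement covers at least $\zeta/2$ \todo{roughly; the exact constant is absorbed into the $1/(4e^3)$ factor} vertices, hence $\card{M} \ge \zeta/2$ (one uses that in a graph with $\Delta\zeta$ non-edges on $\Delta$ vertices, a maximal matching in the complement has $\ge \zeta$ edges, up to constants). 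Restricting attention to the $\approx \zeta$ pairs in $M$, define for each pair $\{u_i,w_i\}$ an indicator $Y_i$ that is $1$ iff $u_i$ and $w_i$ both pick the same color $c_i$ and both keep it. Then $Z \ge \sum_i Y_i$ minus a correction for colors hit by more than one pair; handling the correction is deferred.

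The second step is the probabilistic estimate. For a single pair, $\Pr[u_i,w_i \text{ pick the same color}] = 1/(\Delta+1)$, and conditioned on that, each of $u_i,w_i$ retains the color with probability at least $(1-1/(\Delta+1))^{\Delta} \ge 1/e$ (roughly), so $\E[Y_i] \ge \frac{1}{e^2(\Delta+1)}$ up to constants; summing over the $\zeta/2$ pairs gives $\E[\sum Y_i] = \Omega(\zeta/\Delta)$ — which is the \emph{wrong order}, so in fact the right normalization is that we want slack $\Omega(\zeta)$, meaning the pairs must be weighted differently. The correct accounting (as in \cite{EPS15}) is: partition the $\ge \Delta\zeta$ non-edges of $N(v)$ into $\Delta$ groups each of size $\ge \zeta$ by a proper edge-coloring of the complement graph (which has max degree $< \Delta$, so $\Delta$ color classes, i.e.\ matchings, suffice), pick one matching $M$ of size $\ge \zeta$, and then each pair in $M$ contributes its own "target color" independently enough that the number of colors receiving at least one successfully-retained monochromatic pair is $\Omega(\zeta)$ in expectation. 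Thus $\E[Z'] = \Omega(\zeta)$ where $Z'$ lower-bounds $Z$.

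The third step is concentration. The variables $Y_i$ (over the matching $M$) are not independent, but the underlying randomness is "vertices choosing colors," and the events are determined by disjoint vertex sets (the pair $\{u_i,w_i\}$ together with their neighborhoods) — except neighborhoods overlap, which is exactly where negative association, \cref{lem:chernoff-hoeffding}, comes in: the color choices across distinct vertices are independent (hence NA), and each $Y_i$ is a non-decreasing... this monotonicity is the delicate point. The cleaner route, and the one I would actually take, is to expose the randomness in two stages: first reveal whether each pair is monochromatic (a balls-in-bins / NA structure), conditioning on a set $S$ of $\Omega(\zeta)$ monochromatic pairs occurring w.h.p.\ by Chernoff; then, conditioned on $S$ and their chosen colors, the retention events are NA functions of the remaining vertices' choices, so another Chernoff bound shows $\Omega(\card S)$ of them are retained. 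Composing the two w.h.p.\ statements gives $\Pr[Z < \zeta/(4e^3)] \le e^{-\Omega(\zeta)}$ after tuning constants.

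\textbf{Main obstacle.} The hard part is not any single Chernoff application but the bookkeeping that makes the events amenable to a concentration bound: ensuring the chosen pairs are vertex-disjoint (so their monochromaticity indicators are genuinely a balls-in-bins experiment), ensuring that "a color is retained at $v$" is not destroyed by $v$'s own choice or by collisions between different pairs landing on the same color, and verifying the negative-association / monotonicity hypotheses needed to invoke \cref{lem:chernoff-hoeffding} on variables whose supports (vertex neighborhoods) overlap. Getting the constant down to $1/(4e^3)$ is purely a matter of being slightly wasteful at each of these steps. Since this is precisely Lemma~3.1 of \cite{EPS15}, I would in the actual write-up cite it directly rather than reproduce the argument.
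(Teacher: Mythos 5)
First, note that the paper does not prove this proposition at all: it is imported verbatim as Lemma~3.1 of \cite{EPS15}, so there is no internal proof to compare against, and your closing remark that one would simply cite it is exactly what the authors do. Evaluated on its own merits, however, your sketch has a genuine quantitative gap at its core. Restricting to a matching $M$ of $\Theta(\zeta)$ vertex-disjoint non-adjacent pairs cannot work: each pair is monochromatic with probability only $1/(\Delta+1)$, so $\E[\sum_i Y_i] = \Theta(\zeta/\Delta)$, which you correctly flag as the wrong order --- but your proposed fix (properly edge-color the complement graph and pick one color class of size $\ge \zeta$) changes nothing, since you are still summing $\Theta(\zeta)$ indicators each of expectation $\Theta(1/\Delta)$. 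The two-stage exposure/negative-association concentration argument you then build is erected on top of this matching structure, so even if it were airtight it would concentrate a quantity of expectation $\Theta(\zeta/\Delta)$, not $\Omega(\zeta)$.

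The actual argument must sum over \emph{all} $\Theta(\Delta\zeta)$ non-adjacent pairs in $N(v)$, each contributing $\Theta(1/\Delta)$ to the expectation (same color, both retained, and --- to kill the over-counting when a color is hit by $k>2$ neighbors --- no other neighbor of $v$ selects that color; each of these conditions costs a factor of roughly $e^{-1}$, whence the $e^{3}$ in the constant). These $\Theta(\Delta\zeta)$ indicators are emphatically not vertex-disjoint, so neither a balls-in-bins negative-association argument nor a plain Chernoff bound applies; the standard route, and the one taken in \cite{EPS15}, is to observe that the created slack is an $O(1)$-Lipschitz, $O(1)$-certifiable function of the independent color choices and to apply Talagrand's inequality (in the Molloy--Reed form), which directly yields the $e^{-\Omega(\zeta)}$ tail. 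So the missing idea is twofold: keep all $\Delta\zeta$ pairs rather than a matching, and replace Chernoff-over-disjoint-events with a Talagrand-type bound for certifiable functions.
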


\begin{proposition}
    \label{prop:line-graph-sparsity}
    A node $e$ of the line graph $L(G)$ (i.e., an edge of $G$) has degree $d_L(e)$ at most $\Delta_L=2(\Delta-1)$, and the number of edges in its neighborhood $E_{L(G)}[N(e)\setminus \{e\}]$ is at most $(\Delta-1)^2$, meaning $e$ is $(\Delta-2)/2$-sparse, i.e., $(\Delta_L-2)/4$-sparse.
\end{proposition}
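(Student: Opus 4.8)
The plan is to fix an edge $e=\{u,w\}$ of $G$, view it as a vertex of the line graph $L(G)$, and read off both quantities directly from \cref{def:sparsity} applied to $L(G)$ (whose maximum degree is $\Delta_L=2(\Delta-1)$). For the degree bound, observe that the edges of $G$ adjacent to $e$ are exactly those sharing the endpoint $u$ or the endpoint $w$ with $e$; in a simple graph no edge other than $e$ is incident to both $u$ and $w$, so these two families are disjoint and
\[ d_L(e) \le (d(u)-1)+(d(w)-1) \le 2(\Delta-1) = \Delta_L . \]

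For the number of edges inside the neighborhood, I would partition $N_{L(G)}(e)$ into the set $A$ of edges of $G$ incident to $u$ (other than $e$) and the set $B$ of edges incident to $w$ (other than $e$); as just noted these sets are disjoint, with $\card{A},\card{B}\le\Delta-1$. Any two edges of $A$ share the vertex $u$ and are therefore adjacent in $L(G)$, so $A$ induces a clique contributing $\binom{\card{A}}{2}\le\binom{\Delta-1}{2}$ edges of $L(G)$, and the same holds for $B$. An edge $\{u,x\}\in A$ and an edge $\{w,y\}\in B$ are adjacent in $L(G)$ only when $x=y$, i.e.\ only when $x$ is a common neighbour of $u$ and $w$ distinct from both; there are at most $\Delta-1$ such vertices, hence at most $\Delta-1$ edges of $L(G)$ run between $A$ and $B$. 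Adding the three contributions,
\[ \card*{E_{L(G)}[N(e)\setminus\{e\}]} \le 2\binom{\Delta-1}{2} + (\Delta-1) = (\Delta-1)(\Delta-2)+(\Delta-1) = (\Delta-1)^2 . \]

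Finally I would substitute into \cref{def:sparsity} for $L(G)$:
\[ \zeta_L(e) = \frac{1}{\Delta_L}\left(\binom{\Delta_L}{2} - \card*{E_{L(G)}[N(e)\setminus\{e\}]}\right) \ge \frac{1}{2(\Delta-1)}\left(\binom{2(\Delta-1)}{2} - (\Delta-1)^2\right) . \]
Since $\binom{2(\Delta-1)}{2} = (\Delta-1)(2\Delta-3)$, the bracketed quantity equals $(\Delta-1)\big((2\Delta-3)-(\Delta-1)\big) = (\Delta-1)(\Delta-2)$, so $\zeta_L(e)\ge(\Delta-2)/2$; and since $\Delta_L-2 = 2(\Delta-2)$, this is exactly $(\Delta_L-2)/4$.

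The only delicate point — the ``main obstacle'' such as it is — is arranging the constants so that the clique-plus-clique-plus-crossing count collapses to precisely $(\Delta-1)^2$ rather than something larger: this rests on recognizing that the crossing edges between $A$ and $B$ form at most a partial matching indexed by the common neighbours of $u$ and $w$, and so contribute only the linear term $\Delta-1$ instead of a quadratic one, together with the simple-graph fact that $A$ and $B$ are disjoint so nothing is double-counted. Everything else is a routine unwinding of the two definitions.
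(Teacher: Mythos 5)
The paper states this proposition without proof, treating it as an elementary consequence of \cref{def:sparsity}. Your argument is correct and is the natural way to prove it: the partition of $N_{L(G)}(e)$ into the edges at each endpoint of $e=\{u,w\}$, the two cliques plus a matching-sized crossing count giving exactly $2\binom{\Delta-1}{2}+(\Delta-1)=(\Delta-1)^2$, and the final substitution into the sparsity formula all check out arithmetically.
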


\section{Efficient color sampling with representative sets}
\label{sec:trick}

We now introduce the tool that will allow us to sample and communicate $\Theta(\log n)$ colors in $O(1)$ {\congest} rounds with the right probabilistic guarantees. Let $s$ be the number of elements we sample and $k$ the size of the universe to be sampled from.
If our goal was to be able to sample all random subsets of $[k]$ of size $s$, we would need $\log \binom{k}{s}$ bits to communicate our choice of subset.
But our goal is to communicate less than this amount, so we instead consider a family of $s$-sized subsets of $[k]$ such that picking one of those subsets at random has some of the probabilistic properties of sampling an $s$-sized subset of $[k]$ uniformly at random. The family is much smaller that the set of all possible $s$-sized subsets of $[k]$, which allows us to communicate a member of it in much less than $\log \binom{k}{s}$ bits. We call the family of subsets a \emph{representative family}, made of \emph{representative sets}, and the probabilistic properties we maintain are essentially that:
\begin{itemize}
    \item Every element of $[k]$ is present in about the same number of sets.
    \item For any large enough subset $T$ of $[k]$, a random representative set intersects $T$ in about the same number of elements as a fully random $s$-sized set would.
\end{itemize}
Crucially, the second property holds for a large enough arbitrary $T$, so we will be able to apply it even as $T$ is dependent on the choices of other nodes in the graph as long as the representative set is picked independently from $T$. $T$ will typically be the palette of a node or edge, or the set of colors not tried by any neighbors of a node or edge. Being able to just maintain the two properties above is enough to efficiently adapt many {\LOCAL} algorithms that rely on communicating large subsets of colors to the {\congest} setting.

\begin{definition}[Representative sets]
\label{def:representative-sets}
    Let $U$ be a universe of size $k$. A family $\calF = \{S_1,\ldots,S_t\}$ of $s$-sized sets is said to be an $(\alpha,\delta,\nu)$-representative family iff:
    \begin{align}
        \forall T \subseteq U, \card{T} \geq \delta k:&\quad
        \Pr_{i\in[t]}\event*{\frac{\card{S_i \cap T}}{\card{S_i}} \in [1-\alpha,1+\alpha]\frac{\card{T}}{k}} \geq (1 - \nu), \label{eq:rep-set-large-T}
        \\
        \forall T \subseteq U, \card{T} < \delta k:&\quad
        \Pr_{i\in[t]}\event*{\frac{\card{S_i \cap T}}{\card{S_i}} \leq (1+\alpha)\delta } \geq (1 - \nu),\label{eq:rep-set-small-T}
        \\
        \forall u \in U:&\quad
        \Pr_{i\in[t]}\event*{u\in S_i} \in [1 - \alpha,1+\alpha] \frac{s \cdot t}{k}.
        \label{eq:rep-set-single-elem}
    \end{align}
\end{definition}

We show in \cref{lem:representative-sets} that such families exist for some appropriate choices of parameters. The proof of this result, which relies on the probabilistic method, takes direct inspiration from Newman's Theorem \cite{Newman91}.

\begin{lemma}[Representative sets exist]
\label{lem:representative-sets}
    Let $U$ be a universe of size $k$. For any $\alpha,\delta,\nu >0$, 
    there exists an $(\alpha,\delta,\nu)$-representative family $(S_i)_{i\in[t]}$ of $t \in O(k/\nu + k \log (k))$ subsets, each of size $s \in O(\alpha^{-2}\delta^{-1}\log(1/\nu))$.
\end{lemma}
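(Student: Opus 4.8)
The plan is to use the probabilistic method: sample the family $\calF = (S_i)_{i\in[t]}$ by drawing each $S_i$ independently and uniformly at random among the $s$-subsets of $U$, with $s$ and $t$ set to the claimed orders of magnitude, and show that with positive probability all three conditions of \cref{def:representative-sets} hold simultaneously. For a \emph{fixed} set $T$ and a single random $s$-subset $S$, the quantity $\card{S \cap T}$ is a sum of negatively associated $\{0,1\}$-variables (the balls-and-bins / sampling-without-replacement scenario recalled after \cref{lem:chernoff-hoeffding}) with mean $s\card{T}/k$; so \cref{lem:chernoff-hoeffding} gives, when $\card{T} \ge \delta k$, that $\card{S\cap T}/s$ deviates from $\card{T}/k$ by more than an $\alpha$ factor with probability at most $2\exp(-\alpha^2 s \delta/3)$, and choosing $s = \Theta(\alpha^{-2}\delta^{-1}\log(1/\nu))$ makes this at most, say, $\nu/2$. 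The same Chernoff bound (upper tail only) handles the small-$T$ case in \eqref{eq:rep-set-small-T}: if $\card{T} < \delta k$ then $\E[\card{S\cap T}] < s\delta$, and $\Pr[\card{S\cap T} > (1+\alpha)s\delta] \le \exp(-\alpha^2 s\delta/3) \le \nu/2$, where one should apply the bound with the upper estimate $\mu_H = s\delta$ on the mean.

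Next I would turn the per-$S$ estimates into per-$T$ estimates via a second layer of concentration over the choice of the family. Fix $T$ with $\card{T}\ge\delta k$, let $Y_i$ be the indicator that $S_i$ is ``bad for $T$'' (i.e.\ $\card{S_i\cap T}/s \notin [1-\alpha,1+\alpha]\card{T}/k$); the $Y_i$ are i.i.d.\ with $\E[Y_i] \le \nu/2$, so by a Chernoff bound the empirical bad-fraction $\frac1t\sum_i Y_i$ exceeds $\nu$ with probability at most $\exp(-\Omega(\nu t))$ — this is the event that $\calF$ \emph{fails} condition \eqref{eq:rep-set-large-T} for this particular $T$. There are at most $2^k$ choices of $T\subseteq U$, so a union bound over all $T$ (both the large-$T$ and small-$T$ cases) shows the probability that $\calF$ fails \eqref{eq:rep-set-large-T} or \eqref{eq:rep-set-small-T} for \emph{some} $T$ is at most $2^{k+1}\exp(-\Omega(\nu t))$, which is $< 1/2$ once $t = \Omega(k/\nu)$. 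For condition \eqref{eq:rep-set-single-elem}, fix $u\in U$: each $S_i$ contains $u$ with probability exactly $s/k$, the events over $i\in[t]$ are independent, so $\card{\{i : u\in S_i\}}$ is a Binomial with mean $st/k$; a Chernoff bound gives deviation by an $\alpha$ factor with probability $\le 2\exp(-\alpha^2 st/(3k))$, and union-bounding over the $k$ elements $u$ this is $<1/2$ provided $st/k = \Omega(\alpha^{-2}\log k)$, i.e.\ $t = \Omega(k\log k)$ given our value of $s$. Combining, $t \in O(k/\nu + k\log k)$ makes the total failure probability $<1$, so a good family exists.

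The main obstacle — really the only non-bookkeeping point — is getting the \emph{two} parameters $s$ and $t$ to satisfy all constraints simultaneously with the stated asymptotics, since $s$ must be large enough for the single-draw concentration (forcing $s = \Omega(\alpha^{-2}\delta^{-1}\log(1/\nu))$) while $t$ must be large enough both to beat the $2^k$-sized union bound over subsets $T$ (forcing $t = \Omega(k/\nu)$) and to beat the $k$-sized union bound over elements $u$ \emph{with the already-fixed value of $s$} (forcing $st = \Omega(k\alpha^{-2}\log k)$, hence $t = \Omega(k\log k)$). One has to be a little careful that the constant $\alpha$ appears only inside $s$, not inside the exponent controlling the $T$-union-bound, so that the $T$-union-bound cost is governed purely by $\nu$ and $k$ and does not blow up $t$ further; this is why condition \eqref{eq:rep-set-large-T} is phrased with the relative error $\nu$ on the \emph{fraction of sets}, decoupled from $\alpha$. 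A minor technical care point is invoking \cref{lem:chernoff-hoeffding} with the correct one-sided mean bounds ($\mu_L$ for lower tails, $\mu_H$ for upper tails) in the cases where $\E[\card{S\cap T}]$ is only bounded, not known exactly — in particular in \eqref{eq:rep-set-small-T} and when $\card{T}/k$ is not of a convenient form — but this is routine. I would also remark that independence of the draws of the $S_i$ is what makes the outer concentration (over $i\in[t]$) clean; no negative-association machinery is needed at that level, only at the inner level of a single $\card{S_i\cap T}$.
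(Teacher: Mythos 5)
Your proposal is correct and takes essentially the same route as the paper's proof: a probabilistic existence argument with an inner Chernoff (with negative association) for a single random $s$-set against a fixed $T$, a second independent-draw Chernoff over the $t$ sets, a $2^k$ union bound over $T$ and a $k$ union bound over elements $u$, finally choosing $s$ and $t$ to satisfy all three constraints. The only cosmetic difference is that you fix $s$ first so that the per-set failure probability is $\nu/2$ and then run the outer Chernoff against threshold $\nu$, whereas the paper carries the term $4t\exp(-\alpha^2\delta s/3)$ symbolically and only fixes $s$ at the end; your extra remark on why $\alpha$ decouples from the $T$-union-bound exponent is implicit in the paper but worth making explicit.
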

\begin{proof}
    Our proof is probabilistic: we show that Equations~\ref{eq:rep-set-large-T}, \ref{eq:rep-set-small-T} and~\ref{eq:rep-set-single-elem} all hold with non-zero probability when picking sets at random. We first study the probability that Equations~\ref{eq:rep-set-large-T} and~\ref{eq:rep-set-small-T} hold, and then the probability that Equation~\ref{eq:rep-set-single-elem} holds.

    Consider any set $T\subseteq U$ of size $\geq \delta k$. Pick a random set $S \subseteq U$ of size~$s$. The intersection of $S$ and $T$ has expected size $\E_S[\card{S\cap T}] = \frac{\card{T}}{k}s$. Let us say that $S$ has an \emph{unusual} intersection with $T$ if its size is outside the $[1-\alpha,1+\alpha]\frac{\card{T}}{k}s$ range. By Chernoff with negative dependence,
    \[\Pr_S \left[ \card{S\cap T} \not \in [1-\alpha,1+\alpha]\frac{\card{T}}{k}s \right] \leq 2e^{-s\alpha^2\frac{\card{T}}{3k}} \leq 2e^{-\frac{\alpha^2 \delta}{3}s}.\]
    
    This last quantity also bounds the probability that $\card{S\cap T} > (1+\alpha)\delta s$ when $\card{T} < \delta k$, which we also consider as an unusual intersection.
    
    Pick $t$ sets $S_1,\ldots,S_t$ of size $s$ at random independently from each other, let $X_i$ be the event that the i$^{th}$ set $S_i$ unusually intersects $T$. By Chernoff, the probability that more than $4 t \cdot \exp \left ({-\frac{\alpha^2\delta}{3}s}\right)$ of the sets unusually intersect $T$ is:
    \[\Pr_{S_1\ldots S_t}\left[ \sum_i X_i > 4 t \cdot e^{-\frac{\alpha^2\delta}{3}s} \right] \leq e^{- \frac {t} {3} \cdot \exp\left(-\frac{\alpha^2\delta}{3}s\right)}\]
    
    There are less than $2^{k}$ subsets of $U$. Therefore, the probability that there exists a set $T$ such that out of the $t$ sampled sets $S_1\ldots S_t$, more than $4 t \cdot \exp \left ({-\frac{\alpha^2\delta}{3}s}\right)$ have an unusual intersection with $T$,
    is at most:
    \begin{align*}
        2^{k} \cdot e^{- \frac {t} {3} \cdot \exp\left(-\frac{\alpha^2\delta}{3}s\right)} &=\exp\left(k\cdot \ln(2) - \frac {t} {3} \cdot \exp\left(-\frac{\alpha^2\delta}{3}s\right)\right)
    \end{align*}
    
    This last quantity is an upper bound on the probability that one of Equations~\ref{eq:rep-set-large-T} and~\ref{eq:rep-set-small-T} does not hold. Let us now similarly bound  the probability that Equation~\ref{eq:rep-set-single-elem} does not hold.
    
    For any $u\in U$, the probability that a random $s$-sized subset of $U$ contains $u$ is $s/k$. Let $X_i$ be the event that our i$^{th}$ random set $S_i$ contains $u$, we have:
    \[\Pr_{S_1\ldots S_t}\event*{ \sum_i X_i \not \in [1-\alpha,1+\alpha] \frac {s\cdot t}{k} } \leq 2e^{- \alpha^2\frac{s\cdot t}{3k}}\]
    
    Therefore the probability that Equation~\ref{eq:rep-set-single-elem} does not hold, i.e., that there exists an under- or over-represented element $u \in U$ in our $t$ randomly picked sets, is less than $2k\cdot e^{- \alpha^2\frac{s\cdot t}{3k}}$. The probability that one of Equations~\ref{eq:rep-set-large-T}, \ref{eq:rep-set-small-T}, and~\ref{eq:rep-set-single-elem} does not hold is at most:
    \[\exp\left(k\cdot \ln(2) - \frac {t} {3} \cdot \exp\left(-\frac{\alpha^2\delta}{3}s\right)\right) + \exp\left(\ln(2k) - \alpha^2\frac{s\cdot t}{3k}\right)\]
    
    We now pick the right values for $s$ and $t$ such that: first,  this last probability is less than $1$ and, therefore, a family with all the above properties exist; second, the fraction of sets $S_i$ with the wrong intersection is less than $\nu$ for all $T$.
    
    The fraction of bad sets is guaranteed to be less than $\nu$ if $4 \cdot e^{-\frac{\alpha^2\delta}{3}s} \leq \nu$, which is achieved with $s \geq \ln(4/\nu) \cdot \frac {3}{\alpha^2\delta}$. We take $s$ to be this last value rounded up, i.e., we have $s \in O(\alpha^{-2}\delta^{-1}\log(1/\nu))$. For $t$, we pick it satisfying $t > 3(k\cdot\ln(2)+1)\cdot \exp\left(\frac{\alpha^2\delta}{3}s\right)$ and $t > \frac{3k\cdot(\ln(2k)+1)}{\alpha^2 \cdot s}$, that is, we can pick $t$ of order $\Theta\left( {k} / {\nu} + k \log(k)\right)$ and satisfy all properties with non-zero probability, implying the existence of the desired representative family.
\end{proof}

\section{\texorpdfstring{$(1+\epsilon)\Delta$}{(1+ε)Δ}-vertex coloring}
\label{sec:d1-coloring}

For ease of exposition, we start by applying our techniques in a relatively simple setting before moving on to more complex ones. As many elements are similar between the different settings we only need to gradually make minor adjustments as we deal with more difficult problems. The first setting we consider is the $(1+\epsilon)\Delta$-vertex coloring problem. Our main result in this section is \cref{thm:d1-slack-coloring}:

\begin{theorem}
    \label{thm:d1-slack-coloring}
    Suppose $\Delta \in \Omega(\log^{1+1/\log^*n} n)$.
    There is a {\congest} algorithm that solves the $(1+\epsilon)\Delta$-vertex coloring problem
    w.h.p.\ in $O(\log^* n)$ rounds.
\end{theorem}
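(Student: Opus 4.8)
The plan is to mimic the $O(\log^* \Delta)$ \LOCAL algorithm of Schneider--Wattenhofer, replacing every step in which a node tries a logarithmic number of colors by a step using the representative family of \cref{lem:representative-sets}, so that each round costs only $O(1)$ \CONGEST rounds. The key point that makes the $(1+\epsilon)\Delta$-coloring setting clean is that every node has slack $s(v) \ge \epsilon\Delta$ from the start, and slack never decreases. First I would fix a representative family $\calF=(S_i)_{i\in[t]}$ over the universe $U=[(1+\epsilon)\Delta]$ of colors, with parameters $\alpha$ a small constant (say $\alpha=\epsilon/10$), $\delta=\epsilon/2$, and $\nu = 1/\poly(n)$, so that $s\in O(\log n)$ and $t\in O(\poly(n))$, hence an index into $\calF$ costs $O(\log n)$ bits, i.e.\ $O(1)$ \CONGEST messages. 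This family is known to all nodes (it is a global combinatorial object, or can be obtained deterministically/from shared structure; its existence suffices for the argument, since the algorithm can agree on the lexicographically-first such family). Each node $v$ uses disjoint randomness — e.g.\ hashes its ID — to pick an index $i_v\in[t]$ uniformly, giving it the candidate color set $S_{i_v}$, which it broadcasts in $O(1)$ rounds.

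The core round is then: each uncolored node $v$ computes $T_v$, the set of colors in its current palette $\psi_v$ \emph{minus} the (at most $d^*(v)$) colors that might still be taken by uncolored neighbors — equivalently we argue about the colors guaranteed free. By the slack assumption $|\psi_v| \ge d^*(v) + \epsilon\Delta$; as the uncolored degree shrinks over rounds this gap only grows, so $T_v$ always has size $\ge \epsilon\Delta \ge \delta k$ (for $k=(1+\epsilon)\Delta$, $\delta=\epsilon/2$ works since $\epsilon\Delta \ge \frac{\epsilon}{2}(1+\epsilon)\Delta$ for $\epsilon\le 1$). Since $v$'s index $i_v$ is chosen independently of $T_v$ (which depends only on neighbors' colors and identities), \cref{eq:rep-set-large-T} applies: with probability $\ge 1-\nu$, $|S_{i_v}\cap T_v| \ge (1-\alpha)\frac{|T_v|}{k}\,s = \Omega(\log n)$. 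So $v$ has $\Omega(\log n)$ candidate colors that no neighbor will ever use; it picks one of them (say the smallest) uniformly among a fixed-size subset and tries it. Because two adjacent nodes $u,v$ each try a uniformly random color from a pool of $\Omega(\log n)$ colors, the probability of a conflict on any particular fixed choice is $O(1/\log n)$; summing over the at most $\Delta$ neighbors and choosing the pool size to be $c\log n$ for a large enough constant $c$ relative to $\log\Delta$ — here one needs $\Delta$ polynomial in $n$ or at least $\log\Delta = O(\log n)$, which holds — gives each node constant probability of getting colored, and in fact w.h.p.\ a constant fraction of the uncolored degree is eliminated each round. Iterating the Schneider--Wattenhofer accounting — slack-to-uncolored-degree ratio grows tetrationally, so after $O(\log^* \Delta) = O(\log^* n)$ rounds the uncolored degree drops below any constant and all remaining nodes get colored — finishes the argument. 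The condition $\Delta = \Omega(\log^{1+1/\log^* n} n)$ is exactly what is needed so that in the very first rounds, when the ratio $s(v)/d^*(v)$ is only $\Omega(\epsilon)$, a node can still try $\polylog n$-many... more precisely, it ensures $\Delta$ is large enough that trying $\Theta(s(v)/d^*(v)\cdot \text{something})$ colors from $S_{i_v}$ keeps all probabilities at $1 - 1/\poly(n)$ throughout the $\log^*$ rounds; this matches the threshold in \cite{SW10,EPS15}.

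The main obstacle — and the place where the representative-sets machinery really earns its keep — is the \emph{adaptive dependence} across rounds. In round $r$, the set $T_v$ of guaranteed-free colors depends on the colors that $v$'s neighbors committed to in earlier rounds, which in turn depended on their own representative sets and random indices. One must ensure that when we invoke \cref{eq:rep-set-large-T} for $v$ in round $r$, the index $i_v$ is still independent of $T_v$. This is handled by having each node draw its index $i_v$ \emph{once} at the start and reuse $S_{i_v}$ in every round (only the choice of which color inside $S_{i_v}$ to try changes, driven by fresh private randomness), so that $i_v$ is fixed before any neighbor acts, while $T_v$ is a deterministic function of other nodes' transcripts; the needed independence then holds by construction, and the union bound over all $O(\log^* n)$ rounds and all $n$ nodes costs only a $\log(\text{stuff})$ factor absorbed into $\nu^{-1} = \poly(n)$. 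A secondary technical point is correctly maintaining the slack/uncolored-degree invariant under the Chernoff bounds of \cref{lem:chernoff-hoeffding} (using negative association for the balls-in-bins conflict counts), and verifying that the "try $\Theta(\log n)$ colors at once" step can be communicated: $v$ needs only send $i_v$ (once) plus, each round, an $O(\log n)$-bit mask indicating which indices within $S_{i_v}$ it is attempting — again $O(1)$ \CONGEST rounds. I expect the bulk of the writeup to be this bookkeeping, with the genuinely new content confined to the single "core round" analysis above.
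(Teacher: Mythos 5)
Your overall plan — representative sets in place of arbitrary $\Theta(\log n)$-color samples, Schneider--Wattenhofer accounting — is the right one and matches the paper's strategy, but three of your specific design choices fail, and fixing them is exactly where the paper's proof does real work.

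\textbf{Fixing $i_v$ once breaks the independence you need.} You draw the index $i_v$ once and reuse $S_{i_v}$ in every round, claiming ``$T_v$ is a deterministic function of other nodes' transcripts.'' This is false from round $2$ on: whether a neighbor $u$ of $v$ successfully \emph{adopted} a color $c$ in round $1$ depends on whether $v$ tried $c$ in round $1$, which depends on $S_{i_v}$, hence on $i_v$. So $\psi_v$ (and any $T_v$ built from it) in later rounds is correlated with $i_v$, and you cannot invoke Equation~\ref{eq:rep-set-large-T} as if $T$ were fixed relative to the random choice of index. The paper sidesteps this entirely by drawing a \emph{fresh} $i_v$ in every invocation of the procedure (\cref{alg:d1-multitrials}); since $\log t = O(\log n)$ this costs nothing in \CONGEST. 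Your instinct that fixing $i_v$ ``solves'' dependence has it backwards.

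\textbf{Trying one color from a pool of $\Theta(\log n)$ candidates does not give constant success probability.} You argue that a conflict with any fixed neighbor happens with probability $O(1/\log n)$, then sum over $\Delta$ neighbors — but that union bound gives $O(\Delta/\log n)$, which is $\gg 1$ for the large $\Delta$ regime you are in, so you have not bounded the failure probability below $1$. The Schneider--Wattenhofer mechanism requires each node to try $x$ colors simultaneously (not one) with $x\cdot d^*(v) \le s(v)/2$; then the set $T_\good$ of colors neither used nor tried by neighbors still has size $\ge |\psi_v|/2$, a second application of Equation~\ref{eq:rep-set-large-T} (to the adversarially chosen $T_\good$, which is what the free choice of $T$ in the definition is for) shows a constant fraction of $S_{i_v}\cap\psi_v$ lies in $T_\good$, and the failure probability becomes exponentially small in $x$, namely $(5/6)^x$. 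That exponential decay is what drives the tetrational growth of $x$; a per-round constant success probability would only give you an $O(\log n)$ bound.

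\textbf{You stop too early.} ``After $O(\log^* n)$ rounds the uncolored degree drops below any constant'' is not something Chernoff can deliver — concentration bottoms out around $\Theta(\log n)$. The tetrational phase only gets you to $d^*(v)\le C_c\log n$. A second $\log^* n$-length phase is then needed, in which the number of colors tried is set to $\Theta\parens{\epsilon\Delta\cdot\log^{i/\log^* n}n/\log n}$, and the paper shows (\cref{lem:multitrial-clean-up}) that the uncolored degree drops by a $\log^{1/\log^* n}n$ factor per phase, reaching $0$ w.h.p.\ after $\log^* n$ phases. This second phase is precisely where the hypothesis $\Delta = \Omega(\log^{1+1/\log^* n}n)$ is consumed: it ensures $q\cdot x = \Omega(\log n)$ in every phase so the failure probability stays inverse-polynomial. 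Your sketch never confronts this.

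Minor points: $v$ does not know $T_v$ or $T_\good$ and so cannot ``pick a color from $S_{i_v}\cap T_v$''; it picks from $S_{i_v}\cap\psi_v$, and the analysis compares that to $T_\good$. Also, when you write $|T_v|\ge\epsilon\Delta\ge\delta k$ with $\delta=\epsilon/2$, you need $\epsilon\Delta\ge(\epsilon/2)(1+\epsilon)\Delta$, i.e.\ $\epsilon\le 1$ — fine to assume, but say so; the paper's choice $\delta=\epsilon/(4(1+\epsilon))$ avoids the restriction.
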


Throughout this section, let us assume that all nodes know a common representative family $(S_i)_{i\in[t]}$ with parameters $\alpha=1/2$, $\delta=\frac{\epsilon}{4(1+\epsilon)}$, and $\nu=n^{-3}$ over the color space $U=[(1+\epsilon)\Delta]$. The nodes may, for example, all compute the lexicographically first $(\alpha,\delta,\nu)$-representative family over $U$ guaranteed by~\cref{lem:representative-sets}, with $t \in O(\Delta \cdot n^3)$ and $s \in O(\log n)$, at the very beginning of the algorithm.

We leverage this representative family in a procedure we call \alg{MultiTrials}, where nodes can try up to $\Theta(\log n)$ colors in a round. 
The trade-off 
is that the colors they try are not fully random but picked from a representative set. We show that this does not matter in this application.

\begin{algorithm}
    \caption{Procedure~$\alg{MultiTrials}(x)$ (vertex coloring version)}
    \label[\alg{MultiTrials}]{alg:d1-multitrials}
        
    \begin{enumerate}
        \item $v$ picks $i_v\in [t]$ uniformly at random and chooses a subset $X_v$ of $x$ colors uniformly at random in $S_{i_v}\cap \psi_v$. These are the colors $v$ tries. $v$ describes $X_v$ to its neighbors in $O(1)$ rounds by sending $i_v$ and $(\delta_{\event{c \in X_v}})_{c \in S_{i_v}}$ in $\log(t) +s \in O(\log n)$ bits.
        \item If $v$ tried a color that none of its neighbors tried, $v$ adopts one such color and informs its neighbors of it.
    \end{enumerate}
\end{algorithm}

Using \alg{MultiTrials} with an increasing number of colors, we immediately get an $O(\log^* n)$ algorithm for the $(1+\epsilon)\Delta$-coloring problem (\cref{alg:d1-coloring}).

\begin{algorithm}
    \caption{Algorithm for $(1+\epsilon)\Delta$-vertex coloring (large $\Delta$)}
    \label{alg:d1-coloring}
    \begin{enumerate}
        \item Nodes compute a common $(\alpha,\delta,\nu)$-representative family over $[(1+\epsilon)\Delta]$ guaranteed by \cref{lem:representative-sets}.
        
        \item For $i \in [0..\log^* n]$, for $O(1)$ rounds, each uncolored node runs \alg{MultiTrials}$(2 \knuthupuparrow i)$.\label{step:d1-multitrials}
        \item For $i \in [0..\log^* n]$, each uncolored node runs \alg{MultiTrials}$\parens*{\frac{\epsilon\Delta\cdot \log^{i/\log^*n}n} {2(1+\epsilon)C_c \log n}}$ $O(1)$ times.\label{step:d1-cleanup}
    \end{enumerate}
\end{algorithm}

To show that \cref{alg:d1-coloring} works, we first show that \alg{MultiTrials}, under the right circumstances, is very efficient at coloring nodes (\cref{lem:multitrials-success-prob}). In fact, given the right ratio between slack and uncolored degree, 
as the nodes try multiple colors, they get colored as if each color tried succeeded independently with constant probability.

\begin{lemma}
    \label{lem:multitrials-success-prob}
    Suppose a node $v$ has slack $s(v)\geq \epsilon\Delta$ and $d^*(v)$ uncolored neighbors. Suppose $x \leq \frac{\epsilon}{2(1+\epsilon)}\Delta$. If $x \leq s(v)/2d^*(v)$, then conditioned on an event of high probability $\geq 1-2\nu$, an execution of $\alg{MultiTrials}(x)$ colors $v$ with probability at least $1-2^{-x/4}$, even conditioned on any particular combination of random choices from the other nodes.
\end{lemma}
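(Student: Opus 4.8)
The plan is to show that under the stated hypotheses, the bad events that would prevent $v$ from getting colored are rare. Let me sketch the argument.

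The plan is to read the ``high-probability event'' of the statement as a lucky choice of the index $i_v$, and then to bound the residual failure probability over the choice of the $x$-subset $X_v\subseteq S_{i_v}\cap\psi_v$. First I would fix, once and for all, the random choices of every node other than $v$; these determine the set $B\subseteq\psi_v$ of colors of $v$'s palette that are tried by some (necessarily uncolored) neighbor of $v$ in the execution of $\alg{MultiTrials}(x)$. Since each such neighbor tries at most $x$ colors, $\card{B}\le x\cdot d^*(v)\le s(v)/2$ by the hypothesis $x\le s(v)/2d^*(v)$, so $\card{\psi_v\setminus B}\ge\card{\psi_v}-s(v)/2=s(v)/2+d^*(v)\ge s(v)/2\ge\epsilon\Delta/2$. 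The observation that drives everything is that $v$ fails to be colored exactly when $X_v\subseteq B$: indeed $X_v\subseteq S_{i_v}\cap\psi_v\subseteq\psi_v$, and any color of $X_v\setminus B$ lies in $v$'s palette and is untried by a neighbor, hence is adoptable.

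Next I would isolate the good event $\mathcal G$ over $i_v$. Since $\delta k=\epsilon\Delta/4$ and $k=(1+\epsilon)\Delta$, we have $\card{\psi_v}\ge\epsilon\Delta\ge\delta k$ and $\card{\psi_v\setminus B}\ge\epsilon\Delta/2\ge\delta k$, so \cref{eq:rep-set-large-T} applies both to $T=\psi_v$ and to $T=\psi_v\setminus B$. Let $\mathcal G$ be the event that $i_v$ is simultaneously ``good'' for both choices of $T$, i.e.\ $\card{S_{i_v}\cap\psi_v}\le\tfrac32 s\,\card{\psi_v}/k$ and $\card{S_{i_v}\cap(\psi_v\setminus B)}\ge\tfrac12 s\,\card{\psi_v\setminus B}/k$; a union bound gives $\Pr_{i_v}[\mathcal G]\ge1-2\nu$. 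It is essential here that \cref{eq:rep-set-large-T} holds \emph{for every} admissible $T$, so that $\Pr[\mathcal G]\ge1-2\nu$ no matter which set $B$ the (adversarially fixed) choices of the up to $\Delta$ other nodes produced, and $i_v$ is drawn independently of them.

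Finally I would condition on $\mathcal G$ and finish by a direct estimate. Under $\mathcal G$, $\card{S_{i_v}\cap(\psi_v\setminus B)}\ge\tfrac12 s\cdot\frac{\epsilon\Delta/2}{(1+\epsilon)\Delta}=s\delta\ge1$ for $n$ large enough (as $s=\omega(1)$), so $S_{i_v}\cap(\psi_v\setminus B)\neq\emptyset$; in particular, if $\card{S_{i_v}\cap\psi_v}<x$ then $X_v=S_{i_v}\cap\psi_v$ meets $\psi_v\setminus B$ and $v$ is colored with certainty. Otherwise $X_v$ is a uniformly random $x$-subset of $S_{i_v}\cap\psi_v$, and using $\binom ax/\binom bx\le(a/b)^x$ for $a\le b$,
\[\Pr[\,X_v\subseteq B\mid i_v\,]=\frac{\binom{\card{S_{i_v}\cap B}}{x}}{\binom{\card{S_{i_v}\cap\psi_v}}{x}}\le\left(1-\frac{\card{S_{i_v}\cap(\psi_v\setminus B)}}{\card{S_{i_v}\cap\psi_v}}\right)^{x}\le\left(1-\frac{\tfrac12\card{\psi_v\setminus B}}{\tfrac32\card{\psi_v}}\right)^{x}\le\left(1-\tfrac16\right)^{x}\le2^{-x/4},\]
where the penultimate step uses $\card{\psi_v\setminus B}\ge\card{\psi_v}/2$ (since $\card{B}\le s(v)/2\le\card{\psi_v}/2$) and the last uses $5/6<2^{-1/4}$. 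As this bound is over the choice of $X_v$ alone and holds for every fixing of the other nodes' randomness, the lemma follows.

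The main thing to get right — and essentially the only subtlety — is the bookkeeping of which randomness each probability statement refers to, together with the legitimacy of invoking \cref{eq:rep-set-large-T} on $T=\psi_v\setminus B$ even though $B$ depends on $\Theta(\Delta)$ externally (and possibly adversarially) chosen seeds; this is exactly the uniform-over-all-large-$T$ guarantee that \cref{def:representative-sets} was designed to provide. The only genuine numeric inequality is $5/6<2^{-1/4}$, which is what pins $\alpha$ to $1/2$ here (a looser $\alpha$ would not close the gap once $\epsilon$ is small), and the edge case $\card{S_{i_v}\cap\psi_v}<x$ must be dispatched separately, as above.
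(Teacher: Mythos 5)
Your proof is correct and follows essentially the same route as the paper's: apply the representative-family guarantee \eqref{eq:rep-set-large-T} to $\psi_v$ and to the untried palette $T_\good=\psi_v\setminus B$ (both of size $\ge\delta k$), union-bound to get the $1-2\nu$ event, and then bound the chance that all $x$ sampled colors fall in the tried part by $(5/6)^x\le 2^{-x/4}$. Your treatment is in fact slightly tighter in two spots the paper glosses over — the exact hypergeometric bound $\binom{a}{x}/\binom{b}{x}\le(a/b)^x$ for the without-replacement sampling, and the explicit dispatch of the edge case $\card{S_{i_v}\cap\psi_v}<x$ — but these are refinements, not a different argument.
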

\begin{proof}
    Consider the representative set $S_{i_v}$ randomly picked by $v$ in the commonly known representative family of parameters $\alpha=1/2$, $\delta=\frac{\epsilon}{4(1+\epsilon)}$, and $\nu=n^{-3}$. We know that $S_{i_v}$ intersects any set of colors $T \subseteq [(1+\epsilon)\Delta]$ of size at least $\delta (1+\epsilon)\Delta$ in $[1/2,3/2] \frac{\card{T}}{(1+\epsilon)\Delta} \card{S_{i_v}} \geq \frac \delta 2 \card{S_{i_v}}$ positions w.h.p.
    
    Let us apply this with $\psi_v$, the set of colors not currently used by neighbors of $v$, and $T_\good$, the set of colors that are neither already used nor tried in this round by nodes adjacent to $v$.
    
    Clearly, $T_\good \subseteq \psi_v$, $\card{\psi_v} = s(v) + d^*(v)$, and $\card{T_\good} \geq s(v) + d^*(v) - x \cdot d^*(v) \geq (s(v)+d^*(v))/2 = \card{\psi_v}/2$. Both sets are of size at least $\delta (1+\epsilon)\Delta$, therefore w.h.p.\ $\card{S_{i_v}\cap T_\good} \geq \frac 1 2 \card{S_{i_v}} \cdot \frac {\card{T_\good}} {(1+\epsilon)\Delta} \geq \frac 1 4 \card{S_{i_v}} \cdot \frac {\card{\psi_v}} {(1+\epsilon)\Delta} \geq \frac 1 6 \card{S_{i_v} \cap \psi_v}$. 
    
    Therefore, assuming that the above holds and that there are at least $x$ colors in $S_{i_v} \cap \psi_v$, when $v$ picks $x$ random colors in $S_{i_v} \cap \psi_v$, the colors picked each have a chance at least $1/6$ of being in $T_\good$. The probability that none of them succeeds is at most $(5/6)^x \leq 2^{-x/4}$. The event that $S_{i_v}$ does not have an intersections of unusual size with either $\psi_v$ or $T_\good$ has probability at least $1-2\nu$.
\end{proof}

The second part of the argument consists of showing that the ratio of slack to uncolored degree increases as \cref{alg:d1-coloring} uses \alg{MultiTrials} with an increasing number of colors. \cref{lem:multitrial-next-phase} helps guarantee that the repeated use of \alg{MultiTrials} leaves all uncolored nodes with an uncolored degree at most $C_c \log n$ for some constant $C_c$.

\begin{lemma}
    \label{lem:multitrial-next-phase}
    Suppose the nodes all satisfy $d^*(v) \leq s(v) / (2 \cdot 2\knuthupuparrow i)$, with $s(v) / (2 \cdot 2\knuthupuparrow i) \geq C_c \log n$. Then after $O(1)$ rounds of $\alg{MultiTrials}(2\knuthupuparrow i)$, w.h.p., they all satisfy $d^*(v) \leq \max(s(v) / (2 \cdot 2\knuthupuparrow (i+1)), C_c\log n)$.
\end{lemma}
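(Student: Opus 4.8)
The plan is to track a single uncolored node $v$ through the $O(1)$ executions of $\alg{MultiTrials}(2\knuthupuparrow i)$ and show its uncolored degree shrinks by roughly a factor $2\knuthupuparrow i$, which is precisely the factor by which $s(v)/(2\cdot 2\knuthupuparrow i)$ must drop to reach the next threshold $s(v)/(2\cdot 2\knuthupuparrow(i+1))$. Write $x = 2\knuthupuparrow i$. The hypothesis gives $d^*(v)\le s(v)/(2x)$, so the ratio condition $x\le s(v)/(2d^*(v))$ of \cref{lem:multitrials-success-prob} holds; since $s(v)\ge \epsilon\Delta$ is inherited from the slack generation step and $x$ is at most $\frac{\epsilon}{2(1+\epsilon)}\Delta$ for the relevant range of $i$, that lemma applies and tells us each still-uncolored neighbor $u$ of $v$ gets colored, in one call, with probability $\ge 1 - 2^{-x/4}$, and crucially this bound holds conditioned on $v$'s own random choices (indeed on any fixed choices of the other nodes). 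So I would condition on the high-probability event (probability $\ge 1-2\nu$ per relevant node, union-bounded over all nodes to lose only $\poly(n)\cdot\nu = o(1)$) under which all the palette-intersection estimates hold.

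Next I would argue the degree drop. Fix $v$ and condition on everything except the internal randomness used by $v$'s neighbors to pick \emph{which} colors inside their representative sets to try. For each uncolored neighbor $u$, let $Y_u$ be the indicator that $u$ remains uncolored after one call. By \cref{lem:multitrials-success-prob} applied to $u$ (whose slack and ratio hypotheses are exactly those assumed of all nodes), $\Pr[Y_u = 1]\le 2^{-x/4}$, even after conditioning on the other neighbors' choices — so the $Y_u$ are stochastically dominated by independent Bernoulli$(2^{-x/4})$ variables, or more carefully one uses that the conditional expectation of each $Y_u$ is bounded regardless of the others, which is enough to invoke a Chernoff-type tail bound. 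Hence $\E[\sum_u Y_u]\le d^*(v)\cdot 2^{-x/4}$. Taking $O(1)$ independent calls (say $c$ of them) multiplies the survival probability: after $c$ rounds the expected number of surviving neighbors is at most $d^*(v)\cdot 2^{-cx/4}\le \Delta\cdot 2^{-cx/4}$. Since $x = 2\knuthupuparrow i \ge 2$ and we need the shrink factor to beat $2\knuthupuparrow(i+1)/(2\knuthupuparrow i) = x^{\,x-1}$ — wait, more simply: we need the new degree $\le s(v)/(2\cdot 2\knuthupuparrow(i+1))$. Since $2\knuthupuparrow(i+1) = 2^{2\knuthupuparrow i} = 2^x$, the target is $s(v)/(2\cdot 2^x)$, and $2^{-cx/4}$ with $c\ge 4$ already gives a factor $2^{-x}$ or better, so $d^*(v)\cdot 2^{-cx/4}\le (s(v)/(2x))\cdot 2^{-x} \le s(v)/(2\cdot 2^x)$ as needed. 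Then I would apply a Chernoff bound (\cref{lem:chernoff-hoeffding}) to the sum of the survival indicators across the $c$ rounds to show concentration around this expectation; the subtlety is that these indicators are not independent across rounds (the set of uncolored neighbors changes), so I would instead bound, round by round, using that conditioned on the current uncolored set each neighbor survives with probability $\le 2^{-x/4}$, and apply the negative-association / Chernoff machinery within each round, then chain: either the degree already dropped below the target at some round, or it stays $\ge$ some value and the next round's drop is concentrated. Since $s(v)/(2x)\ge C_c\log n$, the relevant binomial has expectation $\poly\log n$ times a shrink factor, and choosing $C_c$ large enough makes the Chernoff failure probability $n^{-\Omega(1)}$, so a union bound over all $n$ nodes finishes it; the $\max(\cdot, C_c\log n)$ in the conclusion absorbs the case where $s(v)/(2\cdot 2\knuthupuparrow(i+1))$ would dip below $C_c\log n$, in which case we only claim $d^*(v)\le C_c\log n$ and stop the geometric descent there.

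The main obstacle I expect is the dependency structure across the $O(1)$ repeated calls: within one call \cref{lem:multitrials-success-prob} hands us per-neighbor success probabilities that are robust to conditioning, but across calls the identity of the uncolored neighbors is a random set, so the cleanest route is to prove the single-call statement "$d^*(v)$ multiplied by at most $2^{-x/4} + (\text{small slack for concentration})$ w.h.p., conditioned on the state at the start of the call" and then iterate it $c = O(1)$ times, each time paying an $n^{-\Omega(1)}$ failure probability, rather than trying to analyze all $c$ calls jointly. A secondary point requiring care is verifying that $2\knuthupuparrow i$ stays within the regime $x\le \frac{\epsilon}{2(1+\epsilon)}\Delta$ where \cref{lem:multitrials-success-prob} is valid — this is where the hypothesis $\Delta = \Omega(\log^{1+1/\log^* n} n)$ and the fact that we only run step \ref{step:d1-multitrials} for $i\le\log^* n$ (so $2\knuthupuparrow i$ is at most roughly $\log n$, far below $\Theta(\Delta)$) come in, and I would state this bound explicitly before invoking the success-probability lemma.
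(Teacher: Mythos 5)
Your proposal is correct and follows essentially the same route as the paper's proof: invoke \cref{lem:multitrials-success-prob} to get a per-neighbor survival probability of $2^{-x/4}$ that is robust to conditioning on other nodes' choices, apply a Chernoff bound round by round, iterate a constant number of times (the paper uses exactly $4$, matching your $c\ge 4$), and exit early via the $\max(\cdot,C_c\log n)$ clause if the degree drops below the threshold. The only blemish is the false start about the required shrink factor (``roughly a factor $2\knuthupuparrow i$'' and ``$=x^{x-1}$''), which you correct yourself before the final, accurate computation $d^*(v)\cdot 2^{-x}\le \frac{s(v)}{2x}\cdot 2^{-x}\le \frac{s(v)}{2\cdot 2\knuthupuparrow(i+1)}$.
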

\begin{proof}
    Let $v$ be a node of uncolored degree at least $C_c \log n$ (if not, it already satisfies the desired end property). 
    
    By \cref{lem:multitrials-success-prob}, each uncolored neighbor of $v$ stays uncolored with probability at most $2^{-(2\knuthupuparrow i)/4}$. By a Chernoff bound, $C_c$ being large enough, at most $2^{1/4}\cdot2^{-(2\knuthupuparrow i)/4}\cdot d^*(v)$ neighbors of $v$ stay uncolored w.h.p.
    
    Let us repeat this process for $4$ rounds. If at any point the uncolored degree drops below $C_c \log n$, we reached the desired property, and the argument is over. Otherwise, we can apply the Chernoff bound for all $4$ rounds and get that at most $2\cdot2^{-(2\knuthupuparrow i)}\cdot d^*(v) = 2 \cdot\frac 1 {2\knuthupuparrow (i+1)}\cdot d^*(v)$ neighbors of $v$ stay uncolored, so the new uncolored degree of $v$ satisfies:
    \[d^*(v) \leq 2 \cdot\frac 1 {2\knuthupuparrow (i+1)} \cdot \frac {s(v)}{2 \cdot 2\knuthupuparrow i} \leq \frac {s(v)}{2 \cdot 2\knuthupuparrow (i+1)}\ ,\]
    which completes the proof.
\end{proof}

\begin{lemma}
\label{lem:multitrial-clean-up}
    Suppose the nodes all satisfy $d^*(v) \leq C_c \log^{1-i/\log^*n} n$. Then after $O(1)$ rounds of $\alg{MultiTrials}
    \parens*{\frac{\epsilon\Delta\cdot \log^{i/\log^*n} n} {2(1+\epsilon)C_c \log n}}
    $, w.h.p., they all satisfy $d^*(v) \leq C_c \log^{1-(i+1)/\log^*n} n$.
\end{lemma}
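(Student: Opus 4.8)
The plan is to follow the proof of \cref{lem:multitrial-next-phase}, with the adjustments forced by the fact that in the clean-up regime the target $k':=C_c\log^{1-(i+1)/\log^* n}n$ may be far smaller than $C_c\log n$ (indeed $k'=o(1)$ at the last step), so a single Chernoff bound on the surviving neighbours no longer suffices. Write $x:=\frac{\epsilon\Delta\log^{i/\log^* n}n}{2(1+\epsilon)C_c\log n}$ for the number of colours tried and $w:=\log^{(i+1)/\log^* n}n$, so that the key identity $w\cdot k'=C_c\log n$ holds; note $w\ge\log^{1/\log^* n}n=2^{(\loglog n)/\log^* n}\to\infty$ since $\log^* n=o(\loglog n)$.

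First I would verify that \cref{lem:multitrials-success-prob} applies to every still-uncoloured $v$ with this $x$. In a $(1+\epsilon)\Delta$-coloring every such $v$ has $s(v)=\card{\psi_v}-d^*(v)\ge(1+\epsilon)\Delta-d(v)\ge\epsilon\Delta$; also $x\le\frac{\epsilon}{2(1+\epsilon)}\Delta$ (as $\log^{i/\log^* n}n\le\log n$ and $C_c\ge1$), and $x\le s(v)/(2d^*(v))$ (from $s(v)\ge\epsilon\Delta$ and the hypothesis $d^*(v)\le C_c\log^{1-i/\log^* n}n$, which give $s(v)/(2d^*(v))\ge(1+\epsilon)x$). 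Using $\Delta=\Omega(\log^{1+1/\log^* n}n)$ one further gets $x/4\ge\beta w$ with $\beta:=\frac{\epsilon}{8(1+\epsilon)C_c}$. Let $c_0=\Theta(1)$ be a constant with $\card{S_{i_v}\cap\psi_v}\ge c_0\log n$ on the good event (this follows from the chosen parameters $\alpha=\tfrac12$, $\delta=\tfrac{\epsilon}{4(1+\epsilon)}$, $\nu=n^{-3}$ as in the proof of \cref{lem:multitrials-success-prob}). If $x>c_0\log n$, the node simply tries all $\ge c_0\log n$ colours of $S_{i_v}\cap\psi_v$, each free with probability $\Omega(1)$, so $v$ stays uncoloured with probability $n^{-\Omega(1)}$ with a large exponent, and a single round colours all of $N(v)$ w.h.p.; so assume henceforth $x\le c_0\log n$, in which case $v$ tries exactly $x$ colours, $i<\log^* n$ (for $n$ large), hence $k'\ge C_c\ge1$ and $w\le \tfrac{c_0}{4\beta}\log n=O(\log n)$, and by \cref{lem:multitrials-success-prob} one round of $\alg{MultiTrials}(x)$ leaves $v$ uncoloured with probability at most $q:=2\nu+2^{-x/4}\le 2\nu+2^{-\beta w}$, even conditioned on any fixing of all other nodes' choices.

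Next comes the per-round estimate. Fix an uncoloured $v$. By the standard argument behind \cref{lem:multitrial-next-phase} (revealing the relevant random choices in a suitable order; cf.\ \cite{EPS15}), since each neighbour's survival probability is at most $q$ even conditioned on every other node's choices, the number of neighbours of $v$ still uncoloured after the round is stochastically dominated by $\mathrm{Bin}(d^*(v),q)$, with $d^*(v)\le C_c\log n$. A union bound over the possible sets of survivors then gives
\[
\Pr\bigl[\text{at least }k'\text{ neighbours of }v\text{ remain uncoloured after one round}\bigr]\ \le\ \binom{C_c\log n}{k'}q^{k'}\ \le\ (e\,w\,q)^{k'}\ \le\ n^{-\delta}
\]
for a constant $\delta=\delta(\epsilon)>0$: writing $\binom{C_c\log n}{k'}\le(eC_c\log n/k')^{k'}=(ew)^{k'}$ and using $w\,k'=C_c\log n$, a short case analysis on which term of $q$ dominates gives $(ewq)^{k'}\le n^{-\beta C_c+o(1)}$ (or $o(n^{-2})$ when $2\nu$ dominates), so one may take $\delta:=\tfrac12\beta C_c=\tfrac{\epsilon}{16(1+\epsilon)}$.

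Finally, amplification. Since $d^*(v)$ is non-increasing, iterating $r=\lceil(c+1)/\delta\rceil=O(1)$ rounds makes the probability that $d^*(v)>k'$ at most $n^{-r\delta}\le n^{-(c+1)}$; a union bound over the $n$ nodes yields total failure probability $\le n^{-c}$, for any prescribed constant $c$. The main obstacle, and the only genuine departure from \cref{lem:multitrial-next-phase}, is precisely this: a single round reduces $d^*(v)$ below $k'$ only with probability $1-n^{-\Theta(\epsilon)}$, not with high probability, so one must (i) observe that $d^*(v)\le C_c\log n$ at the start of \emph{every} round — re-applying the domination by $\mathrm{Bin}(C_c\log n,q)$ afresh each round — (ii) use the calibration $w\,k'=C_c\log n$ so that the failure exponent $\beta w k'$ is $\Theta(\log n)$, and (iii) amplify to high probability via $O(1)$ repetitions, which is legitimate because $d^*(v)$ is monotone. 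A secondary subtlety is the cap $\card{S_{i_v}\cap\psi_v}$ on the number of colours actually tried, which for large $\Delta$ can be below the nominal $x$; this turns out to be the easy regime, handled above.
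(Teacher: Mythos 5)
Your proposal is correct and follows essentially the same route as the paper's proof: both apply \cref{lem:multitrials-success-prob} to get a per-neighbor survival probability of roughly $2^{-x/4}$, union-bound over all subsets of $k'=C_c\log^{1-(i+1)/\log^*n}n$ of the at most $C_c\log n$ uncolored neighbors using the calibration $k'\cdot x=\Omega(\log n)$, and use $\Theta(1)$ repetitions to reach high probability. The only differences are organizational (the paper folds the $\Theta(1)$ runs into a single union bound where you amplify round by round via monotonicity of $d^*$) plus your explicit treatment of the degenerate $i=\log^* n$ case, which the paper defers to the proof of \cref{thm:d1-slack-coloring}.
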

\begin{proof}
Let $x=\frac{\epsilon\Delta\cdot \log^{i/\log^*n} n} {2(1+\epsilon)C_c \log n}$ denote the number of colors tried in our application of $\alg{MultiTrials}$. For each uncolored node $v$ we have $x \leq s(v)/2d^*(v)$. By \cref{lem:multitrials-success-prob}, conditioned on a high probability event, each uncolored node stays uncolored with probability at most $2^{-x/4}$, regardless of the random choices of other nodes. 
We set $q = C_c \log^{1-(i+1)/\log^*n}n$. Since $\Delta \geq \log^{1+1/\log^*n} n$ and $x \geq \frac{\epsilon}{2(1+\epsilon)C_c}\log^{(i+1)/\log^*n}n$, we have $q\cdot x \in \Omega(\log n)$. 

Consider $q$ neighbors of a node $v$, $\Theta(1)$ runs of $\alg{MultiTrials}(x)$ leave them all uncolored with probability at most $2^{-\Omega(q\cdot x)}$. The probability that a set of $q$ neighbors stays uncolored is bounded by $d^*(v)^q \cdot 2^{-\Omega(q\cdot x)} = 2^{-\Omega(q\cdot (x-\log\log n))} = 2^{-\Omega(\log n)}$. So, w.h.p., less than $q$ neighbors of $v$ stay uncolored.
\end{proof}

With \cref{lem:multitrials-success-prob,lem:multitrial-next-phase,lem:multitrial-clean-up} proved, we only need a few additional arguments to complete the proof of~\cref{thm:d1-slack-coloring}.

\begin{proof}[Proof of~\cref{thm:d1-slack-coloring}]
\label{proof:d1-slack-coloring}
    Step~\ref{step:d1-multitrials} of \cref{alg:d1-coloring} with $i=0$ creates a situation where the hypotheses of \cref{lem:multitrial-next-phase} hold for $i=1$. The repeated application of \cref{lem:multitrial-next-phase} guarantees that, w.h.p., all nodes are either colored or have uncolored degree $\leq C_c\log n$.
    
In Step~\ref{step:d1-cleanup}, all nodes start with uncolored degree at most $C_c \log n$ and slack at least $\epsilon \Delta$, thus fitting the hypotheses of \cref{lem:multitrial-clean-up}. Its repeated application yields that after the first $\log^*n-1$ first phases of this step, each node is either already colored or tries $\Omega(\log n)$ colors in each run of \alg{MultiTrials}, which colors all remaining nodes w.h.p.
\end{proof}

\paragraph*{Lower $\Delta$ and concluding remarks}

When $\Delta \in O(\log^{1+1/\log^*n} n)$,
 a simple use of the shattering technique~\cite{BEPS16} together with the recent deterministic algorithm of~\cite{GK20} (using $O(\log^2 \calC\log n)$ rounds with $O(\log \calC)$ bits to compute a degree+1 list-coloring of a $n$-vertex graph whose lists are subsets of $\range{\calC}$) is enough to solve the problem in $O(\log^3\log n)$ {\congest} rounds, which combined with our previous $O(\log^*(n))$ algorithm for $\Delta \in O(\log^{1+1/\log^*n} n)$ means there exists an algorithm for all $\Delta$ that solves the $(1+\epsilon)\Delta$ coloring problem in $O(\log^3\log n)$ {\congest} rounds w.h.p.

\begin{theorem}
    \label{thm:d1-slack-coloring-low-delta}
    There is a {\congest} algorithm that solves the $(1+\epsilon)\Delta$-vertex coloring problem in $O(\log^3 \log n)$ rounds w.h.p.
\end{theorem}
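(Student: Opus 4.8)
The plan is to split on the value of $\Delta$ and observe that both regimes fit inside the claimed bound. When $\Delta \in \Omega(\log^{1+1/\log^* n} n)$, \cref{thm:d1-slack-coloring} already gives a {\CONGEST} algorithm running in $O(\log^* n) \subseteq O(\log^3 \log n)$ rounds, so nothing further is needed there. All the work is in the complementary regime $\Delta \in O(\log^{1+1/\log^* n} n)$: here $1 + 1/\log^* n \le 2$, so $\Delta \le \log^2 n$ is $\polylog n$, and in particular $\log \Delta$, as well as $\log$ of any fixed polynomial of $\Delta$, are $O(\log\log n)$. For this regime I would run the classical shatter-then-finish-deterministically pipeline, instantiated so that every message stays of size $O(\log\log n)$ and hence well within {\CONGEST}.

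\emph{Pre-shattering phase.} Run the elementary randomized trial procedure --- each uncolored node picks a color uniformly at random from its current palette $\psi_v$ and keeps it iff no uncolored neighbor picked the same color --- for $O(\log \Delta) = O(\log\log n)$ rounds. Each round is a single $O(\log \Delta)$-bit message, so this is a bona fide {\CONGEST} procedure. In the $(1+\epsilon)\Delta$-coloring problem every node starts with slack $s(v) = \card{\psi_v} - d^*(v) \ge \epsilon \Delta$, and slack never decreases (a round removes at most as many colors from $\psi_v$ as it removes uncolored neighbors from $v$); a standard calculation then shows that, as long as $v$ is uncolored, it becomes colored in the next round with probability at least $\epsilon/(1+\epsilon)$, regardless of the history of the process. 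Taking the constant in $O(\log\Delta)$ large enough drives the per-node probability of remaining uncolored below $\Delta^{-c}$ for any desired constant $c$, and the shattering lemma of Barenboim et al.\ \cite{BEPS16} then gives that, w.h.p., every connected component of the subgraph induced by the still-uncolored vertices has size $\poly(\Delta) \cdot \log n = \polylog n$.

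\emph{Deterministic cleanup.} After pre-shattering, the residual instance is a degree-$+1$ list-coloring instance on the uncolored subgraph: each uncolored $v$ needs a color from $\psi_v \subseteq [(1+\epsilon)\Delta]$, adjacent uncolored vertices must differ, and $\card{\psi_v} \ge d^*(v) + 1$ (indeed $\card{\psi_v} \ge d^*(v) + \epsilon\Delta$). I would solve it by running the deterministic algorithm of \cite{GK20} inside each uncolored component independently; the components are vertex-disjoint, so this parallelizes for free. On a component of $N$ vertices with color space $[\calC]$ that algorithm uses $O(\log^2 \calC \cdot \log N)$ rounds and $O(\log \calC)$-bit messages; here $\calC \le (1+\epsilon)\Delta$ and $N$ are both $\polylog n$, so $\log \calC, \log N \in O(\log\log n)$, giving $O(\log^3 \log n)$ rounds with $O(\log\log n)$-bit messages. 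Adding the two phases yields $O(\log\log n) + O(\log^3 \log n) = O(\log^3 \log n)$ rounds for the small-$\Delta$ regime, which together with the large-$\Delta$ case proves the theorem.

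The main point that needs care is checking that the shattering bound of \cite{BEPS16} really yields components of size $\polylog n$ uniformly over the whole range $\Delta \in O(\log^{1+1/\log^* n} n)$ --- including very small or even constant $\Delta$, where one should use the $O(\Delta^{O(1)} \log_\Delta n)$ form of the component-size bound --- so that both $\log N$ and $\log \calC$ are genuinely $O(\log\log n)$, and confirming that \cite{GK20}'s algorithm indeed operates with $O(\log \calC)$-bit messages when called on a subgraph, so that the whole pipeline is a genuine {\CONGEST} algorithm. Everything else is routine bookkeeping: there is no new idea beyond packaging known randomized shattering and known deterministic list coloring, both of which are already message-efficient at these parameter sizes, and the slack $\ge \epsilon\Delta$ inherent to the $(1+\epsilon)\Delta$-coloring problem means no separate slack-generation step is required.
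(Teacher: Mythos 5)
Your proposal is correct and follows essentially the same route as the paper's (very terse) argument: split on the magnitude of $\Delta$, invoke \cref{thm:d1-slack-coloring} for the large-$\Delta$ regime, and for $\Delta \in O(\log^{1+1/\log^* n} n)$ apply the shattering technique of \cite{BEPS16} followed by the deterministic list-coloring algorithm of \cite{GK20}, observing that both $\log \calC$ and $\log N$ are $O(\log\log n)$ so the cleanup costs $O(\log^3\log n)$ \CONGEST rounds. The additional details you supply (the $\epsilon/(1+\epsilon)$ per-round success probability from the built-in slack, the $\poly(\Delta)\cdot\log n$ component-size bound) are consistent with what the paper leaves implicit.
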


\cref{thm:d1-slack-coloring,thm:d1-slack-coloring-low-delta} also hold if instead of having a palette with $\epsilon \Delta$ more colors than vertices have neighbors, thus having slack from the start, we are instead trying to color a $(1-\epsilon)$-locally sparse graph with $(\Delta+1)$ colors. In this case, nodes try a single random color at the very start of the algorithm to generate slack through \cref{prop:slack-lemma}.

\section{Edge coloring}
\label{sec:edge-coloring}

Moving on to the more complicated setting of edge-coloring, we will see that most of what we proved in the previous section is easily adapted to the edge-coloring setting. We first convert the $(1+\epsilon)\Delta$-vertex coloring result to a $(2+\epsilon)\Delta$-edge coloring and then indicate how the number of colors can be reduced to $(2\Delta-1)$. Finally, we show how it can be combined with another edge coloring algorithm \cite{DGP98} to obtain a superfast $(1+\epsilon)\Delta$-edge coloring, for $\Delta \in O(\log^{1+1/\log^*n} n)$.

\subsection{\texorpdfstring{$(2+\epsilon)\Delta$}{(2+ε)Δ}-edge coloring}

\begin{theorem}
    \label{thm:slack-edge-coloring}
    Suppose $\Delta \in O(\log^{1+1/\log^*n} n)$. There is a {\congest} algorithm that solves the $(2+\epsilon)\Delta$-edge coloring problem w.h.p.\ in $O(\log^* n)$ rounds.
\end{theorem}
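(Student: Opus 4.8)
The plan is to obtain the edge coloring by \emph{vertex}-coloring the line graph $L(G)$, transferring the whole development of \cref{sec:d1-coloring} to it once we explain how to simulate \alg{MultiTrials} on $L(G)$ in \CONGEST. Note that $L(G)$ has $n' = \card{E(G)} \le n^2$ vertices and maximum degree $\Delta_L \le 2\Delta - 2$, and that with $(2+\epsilon)\Delta$ colors every vertex $e$ of $L(G)$ already has slack $s_L(e) = \card{\psi_e} - d_L^*(e) \ge (2+\epsilon)\Delta - d_L(e) \ge \epsilon\Delta + 2 \ge \tfrac{\epsilon}{2}\Delta_L$ from the start, with no need to generate any. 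Hence $L(G)$ falls exactly into the setting of \cref{thm:d1-slack-coloring}: its maximum degree $\Delta_L = \Theta(\Delta)$ is in the required range, its vertex count $n' \le n^2$ only changes constants (so $\log^* n' = \Theta(\log^* n)$), and $\epsilon$ is replaced by a constant $\Theta(\epsilon)$. As there, all nodes first agree on a common $(\alpha,\delta,\nu)$-representative family over $U = [(2+\epsilon)\Delta]$ with $\alpha = 1/2$, $\delta = \Theta(\epsilon)$ and $\nu = n^{-c}$ for a constant $c$ large enough to union-bound over the $\le n^2$ vertices of $L(G)$; by \cref{lem:representative-sets} this family has $s \in O(\log n)$ and $t \in \poly(n)$.

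The one new ingredient is simulating an execution of \alg{MultiTrials} on $L(G)$ by $O(1)$ \CONGEST rounds on $G$. I will let each vertex $uv$ of $L(G)$ be handled jointly by its two $G$-endpoints $u,v$, with the endpoint of smaller identifier acting as leader; thus every $G$-node $u$ simulates all line-graph vertices incident to it. For an edge $e = uv$: the leader draws $i_e \in [t]$ from its own private randomness and sends it over link $uv$ ($\log t = O(\log n)$ bits); each endpoint replies with the $s$-bit vector recording which colors of $S_{i_e}$ are unused among \emph{its own} incident edges, so both learn $S_{i_e} \cap \psi_e$; the leader picks the random $x$-subset $X_e \subseteq S_{i_e} \cap \psi_e$ and transmits its $s$-bit indicator; finally, since each endpoint knows the $O(\log n)$-bit description (index and indicator) of the tried set $X_{e'}$ of every line-graph vertex $e'$ incident to it — it chose $X_{e'}$ as leader, or received its description — both endpoints can locally determine which colors of $X_e$ are unopposed on their side, exchange this ($s$ bits), agree on an adopted color if one exists, and update their local palettes. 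Each physical link carries only $O(\log n)$ bits and all edges are processed in parallel, so one round of \alg{MultiTrials} on $L(G)$ costs $O(1)$ \CONGEST rounds; moreover, since leaders use fresh independent randomness, the simulation realizes precisely the distribution of \alg{MultiTrials} on $L(G)$ assumed in \cref{lem:multitrials-success-prob}.

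With the simulation available, the remainder is a transcription of the proof of \cref{thm:d1-slack-coloring} onto $L(G)$. \cref{lem:multitrials-success-prob} applies verbatim — it is deliberately stated to tolerate \emph{any} dependence of the ``good colors'' set $T_\good$ on the choices of a vertex's neighbors, which is exactly what we need since $T_\good$ for an edge $e$ depends on the up-to-$2\Delta-2$ edges adjacent to it. Running \cref{alg:d1-coloring} on $L(G)$, the tetrating phase (\cref{lem:multitrial-next-phase}) brings every uncolored edge down to uncolored degree $O(\log n)$, and then the clean-up phase (\cref{lem:multitrial-clean-up}), which uses $\Delta_L \in \Omega(\log^{1+1/\log^* n'} n')$, finishes in $O(\log^* n')$ further iterations that color as though $\Omega(\log n)$ colors were each tried with independent constant success probability. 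The total is $O(\log^* n)\cdot O(1) = O(\log^* n)$ \CONGEST rounds on $G$, w.h.p.

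The main obstacle is the \CONGEST simulation of \cref{alg:d1-multitrials} on the line graph: one must check that neither endpoint of an edge ever has to transmit a full palette or a full tried-set of an adjacent edge, and that the color-adoption conflict check can be settled with $O(\log n)$ bits over each physical link. The delicate point is to keep the random choices of distinct line-graph vertices mutually independent — so that the guarantees of \cref{sec:trick} and the Chernoff bounds inside \cref{lem:multitrial-next-phase,lem:multitrial-clean-up} stay valid — while a single physical node simultaneously acts as leader for many of its incident edges; using each leader's private randomness for the edges it leads resolves this cleanly. Everything else mirrors the vertex-coloring argument of \cref{sec:d1-coloring} essentially line by line.
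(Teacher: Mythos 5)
Your proposal is correct and follows essentially the same route as the paper: the endpoint-cooperation simulation you describe (leader draws $i_e$, both endpoints exchange $s$-bit indicator vectors to compute $S_{i_e}\cap\psi_e$, then the tried set and the conflict information are exchanged in $O(\log n)$ bits per link) is exactly the paper's \alg{PaletteSampling} plus the edge version of \alg{MultiTrials}, and the rest is the same reduction to the analysis of \cref{thm:d1-slack-coloring} on $L(G)$. Your explicit attention to the independence of per-edge randomness and to the union bound over the $\le n^2$ line-graph vertices is a welcome tightening of details the paper leaves implicit.
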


To prove \cref{thm:slack-edge-coloring}, the most crucial observation is that the elements of the graph trying to color themselves no longer know their palette. In the edge-coloring setting, each of the two endpoints of an edge $e$ only has a partial view of which colors are used by $e$'s neighbors. Communicating the list of colors used at one endpoint of $e$ to the other endpoint is impractical, as it could require up to $\Theta(\Delta\log\Delta)$ bits.
To circumvent this, we introduce a procedure (\alg{PaletteSampling}) for the two endpoints of an edge $e$ to efficiently sample colors in $\psi_e$, the palette of $e$, again using representative sets. The \alg{MultiTrials} procedure is then easily adapted to the edge-setting by making it use \alg{PaletteSampling}, and the same algorithm as the one we had in the node setting works here, simply swapping its basic building block procedure for an edge-adapted variant.

As before (but with a different color space) let us assume throughout this section that all nodes know a common representative family $(S_i)_{i\in[t]}$ with parameters $\alpha=1/2$, $\delta=\frac{\epsilon}{4(1+\epsilon)}$, and $\nu=n^{-3}$ over the color space $U=[(2+\epsilon)\Delta]$.

For each edge $e$, let us denote by $v_e$ and $v'_e$ its two endpoints, with $v_e$ the one with the highest ID of the two. Let us denote by $\psi_e$ the \emph{palette} of $e$, the set of colors unused by $e$'s neighboring edges, and for a node $u$ let $\psi_u$ be the set of colors unused by edges around $u$. For an uncolored edge $e$, $\psi_e = \psi_{v_e} \cap \psi_{v'_e}$.

\begin{algorithm}
    \caption{Procedure~$\alg{PaletteSampling}$ (edge-coloring version)}
    \label{alg:palette-sampling}
    \begin{enumerate}
        \item $v_e$ picks $i_e\in [t]$ uniformly at random and sends $i_e$ to $v'_e$ in $O(\log(t) / \log(n)) = O(1)$ rounds.
        \item $v'_e$ replies with $s$ bits describing $S_{i_e} \cap \psi_{v'_e}$ in $O(1)$ rounds.
        \item $v_e$ sends $s$ bits to $v'_e$ describing $S_{i_e} \cap \psi_{v_e}$ in $O(1)$ rounds.
    \end{enumerate}
\end{algorithm}

\begin{proposition}
\label{prop:palette-sampling}
Suppose $e$'s palette $\psi_e$ satisfies $\card{\psi_e} \geq \delta \cdot (2+\epsilon)\Delta$. Then $v_e$ and $v'_e$ find $[1-\alpha,1+\alpha]\cdot s\cdot\card{\psi_e} / (2+\epsilon)\Delta$ colors in $e$'s palette in an execution of \alg{PaletteSampling} w.h.p.
\end{proposition}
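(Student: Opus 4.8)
The plan is to split the statement into a \emph{communication} claim---both endpoints of $e$ hold the set $S_{i_e}\cap\psi_e$ after $O(1)$ rounds---and a \emph{size} claim---that this set has the stated cardinality w.h.p.---where the latter is an immediate instantiation of the large-$T$ guarantee~\eqref{eq:rep-set-large-T} of the representative family.

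For the communication claim, recall first that every node $u$ can compute $\psi_u$ locally from the colors on its incident edges, and that for an uncolored edge $e$ one has $\psi_e=\psi_{v_e}\cap\psi_{v'_e}$. Fix, once and for all, a canonical ordering of the $s$ elements of each $S_i$; then any subset of a given $S_i$ is specified by an $s$-bit indicator vector. In Step~1, $v_e$ transmits $i_e$ in $\lceil\log t\rceil=O(\log n)$ bits (recall $t=O(\Delta n^3)$, so $\log t=O(\log n)$), hence $v'_e$ also knows which set $S_{i_e}$ is in play. In Step~2, $v'_e$ sends the $s=O(\log n)$-bit indicator of $S_{i_e}\cap\psi_{v'_e}$; $v_e$ now knows $S_{i_e}\cap\psi_{v'_e}$ and, intersecting it with the $\psi_{v_e}$ it already knows, recovers $S_{i_e}\cap\psi_{v_e}\cap\psi_{v'_e}=S_{i_e}\cap\psi_e$. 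Step~3 is the mirror image and lets $v'_e$ recover the same set. All three messages are $O(\log n)$ bits, i.e.\ $O(1)$ \CONGEST rounds.

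For the size claim, apply property~\eqref{eq:rep-set-large-T} of the fixed $(\alpha,\delta,\nu)$-representative family over $U=[(2+\epsilon)\Delta]$ with $T:=\psi_e$. This is admissible: $\psi_e\subseteq U$, and by hypothesis $\card{\psi_e}\ge\delta\cdot(2+\epsilon)\Delta=\delta k$. Since $\card{S_i}=s$ for every $i$ and $i_e$ is drawn uniformly from $[t]$, \eqref{eq:rep-set-large-T} yields
\[
  \Pr_{i_e}\event*{\card{S_{i_e}\cap\psi_e}\in[1-\alpha,1+\alpha]\,\frac{s\cdot\card{\psi_e}}{(2+\epsilon)\Delta}}\;\ge\;1-\nu\;=\;1-n^{-3},
\]
which is exactly the asserted bound, and it holds with high probability. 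Combining with the communication claim, both $v_e$ and $v'_e$ end up knowing a set of $[1-\alpha,1+\alpha]\cdot s\cdot\card{\psi_e}/(2+\epsilon)\Delta$ colors of $\psi_e$, as claimed.

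The one subtlety worth spelling out is that~\eqref{eq:rep-set-large-T} is phrased for a \emph{fixed} target set $T$, whereas $\psi_e$ is itself a random object determined by the algorithm's earlier coloring decisions. This causes no trouble: the index $i_e$ is fresh randomness, independent of everything that fixed $\psi_e$, so we may condition on any realization of $\psi_e$ with $\card{\psi_e}\ge\delta k$ and invoke~\eqref{eq:rep-set-large-T} for that particular $T$; the probability bound then holds over the choice of $i_e$ alone. (This same independence-of-$T$ is precisely what will let us reuse \alg{PaletteSampling} inside the edge version of \alg{MultiTrials}, with $T$ playing the role of the colors left untried by neighbors, exactly as in the vertex-coloring analysis.) Beyond this bookkeeping I do not anticipate any real obstacle: the proposition is essentially the definition of representative sets read off for the set $T=\psi_e$.
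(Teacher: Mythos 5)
Your proposal is correct and follows exactly the paper's approach: the paper's own proof consists of a single sentence stating that the result follows directly from Equation~\eqref{eq:rep-set-large-T} applied with $T=\psi_e$, which is precisely your ``size claim,'' while your ``communication claim'' and the independence-of-$T$ subtlety just make explicit what the paper leaves implicit from the description of \alg{PaletteSampling}.
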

\begin{proof}
    The result follows directly from Equation~\ref{eq:rep-set-large-T} in the definition of representative sets (\cref{def:representative-sets}).
\end{proof}

\alg{PaletteSampling} leverages that while it requires quite a bit of communication for an endpoint of an edge to learn which colors are used at the other endpoint, sending a random color for the other endpoint to reject or approve is quite communication-efficient.
The representative sets and the slack at the edges' disposal further allow us to sample not just $\Theta(\log n/\log \Delta)$ colors (represented in $\log \Delta$ bits each) in $O(1)$ rounds but $\Theta(\log n)$ colors by sampling pseudo-independent colors.

\begin{algorithm}[ht]
    \caption{Procedure~$\alg{MultiTrials}(x)$ (edge-coloring version)}
    \label{alg:multitrials}
        
    \begin{enumerate}
        \item $v_e$ and $v'_e$ execute $\alg{PaletteSampling}$. Let $S_{i_e}$ be the randomly picked representative set.
        \item $v_e$ picks a subset $X_e$ of $x$ colors uniformly at random in $S_{i_e}\cap \psi_e$ and sends $s$ bits to $v'_e$ to describe it. These are the colors $e$ tries.
        
        At this point, each node $u$ knows which colors are tried by all its incident edges.
        \item Each $v_e$ describes to $v'_e$ which of the $x$ colors tried by $e$ were not tried by any other edge adjacent to $v_e$ in $O(1)$ rounds, and reciprocally.
        \item If $e$ tried a color that no edge adjacent to $e$ tried, $v_e$ picks an arbitrary such color, sends it to $v'_e$, and $e$ adopts this color.
    \end{enumerate}
\end{algorithm}

An execution of \alg{MultiTrials} maintains the invariant that each node knows which colors are used by edges incident to it. As before, the representative sets guarantee that for any uncolored edge $e$, whatever colors other edges adjacent to $e$ are trying, the chosen representative set $S_{i_e}$ has a large intersection with the set of unused and untried colors, as long as this set represents a constant fraction of the color space (which slack and a good choice of $x$ guarantee).

\begin{algorithm}[ht]
    \caption{Algorithm for $(2+\epsilon)\Delta$-edge coloring (large $\Delta$)}
    \label{alg:slack-edge-coloring}
    \begin{enumerate}
        \item Nodes send their ID to their neighbors.
        \item Nodes compute a common $(\alpha,\delta,\nu)$-representative family over $[(2+\epsilon)\Delta]$.
        
        \item For $i \in [0..\log^* n]$, for $O(1)$ rounds, each uncolored edge runs \alg{MultiTrials}$(2 \knuthupuparrow i)$,\label{step:multitrials}
        \item For $i \in [0..\log^* n]$, for $O(1)$ rounds, each uncolored edge runs \alg{MultiTrials}$\parens*{\frac{\epsilon\Delta\cdot \log^{i/\log^*n}n} {2(2+\epsilon)C_c \log n}}$.
    \end{enumerate}
\end{algorithm}

\cref{alg:slack-edge-coloring} is exactly the same algorithm as \cref{alg:d1-coloring} in which we have swapped the node version of \alg{MultiTrials} for its edge-variant, which makes for a straightforward proof.

\begin{proof}[Proof of~\cref{thm:slack-edge-coloring}]
\label{proof:edge-coloring}
    The procedure \alg{MultiTrials} adapted to the edge-setting has the same properties as the \alg{MultiTrials} procedure we analyzed in the vertex coloring setting. More precisely, \cref{lem:multitrials-success-prob,lem:multitrial-next-phase} still hold (with the line graph $L(G)$ instead of $G$ and edges instead of nodes), and we can simply refer to the Proof of \cref{thm:d1-slack-coloring} for the details of how all edges get colored w.h.p.\ by \cref{alg:slack-edge-coloring}.
\end{proof}

\subsection{\texorpdfstring{$(2\Delta-1)$}{(2Δ-1)}-edge coloring}

The algorithms we just gave for $(2+\epsilon)\Delta$-edge coloring are easily adapted to the $(2\Delta-1)$-edge coloring setting by creating slack through \cref{prop:line-graph-sparsity,prop:slack-lemma} at the start.

\begin{theorem}
There is a \CONGEST algorithm that solves the $(2\Delta-1)$-edge coloring problem w.h.p.\ in $O(\log^4\log n)$ rounds.
When $\Delta = \Omega(\log^{1+1/\log^*n} n)$, the time complexity is $O(\log^* n)$.
\end{theorem}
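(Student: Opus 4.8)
The plan is to split on the size of $\Delta$. When $\Delta = \Omega(\log^{1+1/\log^*n}n)$ I would obtain the $O(\log^*n)$ bound by manufacturing $\Omega(\Delta)$ slack at every edge and then running the machinery of \cref{thm:slack-edge-coloring} essentially verbatim; when $\Delta = O(\log^{1+1/\log^*n}n)$ I would imitate \cref{thm:d1-slack-coloring-low-delta} --- shatter, then finish with a deterministic algorithm --- but applied to the line graph. Taking the worse of the two costs yields $O(\log^4\log n)$, and the $O(\log^*n)$ bound is active once $\Delta$ passes the threshold.

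\textbf{Large $\Delta$.} Run on the line graph $L(G)$, whose vertices are the edges of $G$, whose maximum degree is $\Delta_L = 2(\Delta-1)$, and in which every vertex has sparsity $(\Delta_L-2)/4 = \Theta(\Delta)$ by \cref{prop:line-graph-sparsity}; in this regime that sparsity is $\Omega(\log n)$. After an $O(1)$-round step in which every edge tries one uniformly random color of $[2\Delta-1]$ --- picked by its higher-ID endpoint, forwarded to the other endpoint, kept only if no incident edge proposed it --- \cref{prop:slack-lemma} and a union bound over the at most $n^2$ edges give that w.h.p.\ every still-uncolored edge has slack $\Omega(\Delta)$, i.e.\ a constant fraction of the color space. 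This is precisely the hypothesis of the $(2+\epsilon)\Delta$ argument: fix a representative family over $U=[2\Delta-1]$ from \cref{lem:representative-sets} with $\alpha=1/2$, $\nu=n^{-3}$, and $\delta$ a constant small enough that $\delta\card{U}$ is below that slack, then run \cref{alg:slack-edge-coloring}. \cref{lem:multitrials-success-prob,lem:multitrial-next-phase,lem:multitrial-clean-up} go through unchanged with $L(G)$ substituted for $G$, since their only inputs are slack $\Omega(\Delta)$ and $\Delta=\Omega(\log^{1+1/\log^*n}n)$; the single extra round of slack generation is absorbed into $O(\log^*n)$.

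\textbf{Small $\Delta$.} Now the number of colors $\calC := 2\Delta-1$ is $O(\log^{1+1/\log^*n}n)$, so $\log\calC = O(\log\log n)$. As in \cref{thm:d1-slack-coloring-low-delta}: (i) run the ``each uncolored edge tries a random palette color'' process on $L(G)$ for $O(\log\Delta)=O(\log\log n)$ rounds, so that by the shattering analysis of \cite{BEPS16} the uncolored part of $L(G)$ breaks into components of size $N = \poly(\log n)$; then (ii) run the deterministic degree-$+1$ list-coloring algorithm of \cite{GK20} inside each component, at cost $O(\log^2\calC\cdot\log N) = O(\log^3\log n)$ with $O(\log\calC)$-bit messages. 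The one new ingredient relative to the vertex case is implementing each trial of step (i) in \CONGEST, where an edge cannot cheaply learn its palette: use \cref{alg:palette-sampling} (or, when $\Delta=o(\log n)$, simply have one endpoint ship its whole $\calC$-bit palette bitmask). While an edge has $\Theta(\Delta)$ uncolored neighbors its palette is a constant fraction of $\calC$ and \cref{prop:palette-sampling} delivers a near-uniform sample, which is what the \cite{BEPS16} analysis consumes; the edges whose palette has already shrunk are exactly those with few uncolored neighbors and can be parked for the post-shattering phase. Charging the sampling overhead over the $O(\log\Delta)$ shattering rounds costs one extra $\log\log n$ factor, giving $O(\log^4\log n)$ overall; alternatively one may black-box the $O(\log\Delta)+\poly(\log\log n)$-round \CONGEST $(2\Delta-1)$-edge coloring obtainable from \cite{HKMN20_fullversion} for this range of $\Delta$.

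\textbf{Main obstacle.} The large-$\Delta$ case is a routine re-parametrization of \cref{sec:edge-coloring}. The real work is the small-$\Delta$ shattering phase: pushing the \cite{BEPS16} argument through when the per-edge trials come from \cref{alg:palette-sampling} instead of the exact uniform distribution on $\psi_e$, and cleanly disposing of the edges whose palette has dropped below a constant fraction of $\calC$, for which \cref{prop:palette-sampling} provides no guarantee. Confirming that this costs at most an extra $\log\log n$ factor is the crux.
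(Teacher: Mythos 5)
Your decomposition and your large-$\Delta$ argument coincide with the paper's: generate slack through \cref{prop:line-graph-sparsity,prop:slack-lemma} and rerun the $(2+\epsilon)\Delta$ machinery of \cref{alg:slack-edge-coloring} over the color space $[2\Delta-1]$; that half is fine. The gap is in the small-$\Delta$ case, and it sits exactly where you put your finger. You need every trial during the shattering phase to succeed with probability $\Omega(1)$, but \cref{prop:palette-sampling} only applies while $\card{\psi_e}\geq\delta\calC$, and your proposed fix --- ``park'' the edges whose palette has shrunk because they ``have few uncolored neighbors'' --- does not hold up. The only a priori bound is $\card{\psi_e}\geq d^*_L(e)+1$, so $\card{\psi_e}<\delta\calC$ merely gives $d^*_L(e)<\delta\calC=O(\Delta)$: a parked edge can still have uncolored degree $\Theta(\Delta)$, parked edges can form large connected components among themselves, and the shattering guarantee then says nothing about the components they create. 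Nothing in your post-shattering phase absorbs them.

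The paper closes this hole with slack rather than a case analysis on palette size. \cref{prop:slack-lemma} only needs sparsity $\Omega(\log n)$, and by \cref{prop:line-graph-sparsity} the line graph has sparsity $(\Delta-2)/2$, so slack generation works for every $\Delta\geq C'\log n$ --- not only above the $\log^{1+1/\log^*n}n$ threshold. After the single initial random-color round, every uncolored edge has slack $\Omega(\Delta)$ forever, hence $\card{\psi_e}\geq d^*_L(e)+\Omega(\Delta)=\Omega(\calC)$ in every subsequent round, so \cref{alg:palette-sampling} followed by one trial always succeeds with probability $\Omega(1)$, independently of random events at constant distance; this is precisely the hypothesis of the shattering lemma the paper imports (Lemma~3.13 of \cite{HKMN20_fullversion}). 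For $\Delta<C'\log n$ the endpoints simply exchange their entire $O(\log n\cdot\log\log n)$-bit palettes in $O(\log\log n)$ rounds, as in \cref{lem:simulation-overhead}, and the trials are exactly uniform. Two cosmetic differences remain: the paper reduces the uncolored degree to $O(\log n)$ before shattering by running the MultiTrials phases (your $O(\log\Delta)$ rounds of single trials also achieve this within the $O(\log^4\log n)$ budget), and the dominant cost is the $O(\log\log n)$-overhead simulation of \cite{GK20} on the shattered components (\cref{lem:simulation-overhead}), not the shattering itself.
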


What remains is to handle the small-degree case.

\paragraph*{Algorithm for small $\Delta$}

Obtaining an $O(\poly(\log \log n)$-round {\congest} algorithm when 
$\Delta \in O(\log^{1+1/\log^*n} n)$
requires some care in the edge-setting. We sketch how to get an $O(\log^4\log n)$ algorithm here, by first running our $O(\log^*n)$ algorithm to reduce the uncolored degree to $O(\log n)$, and then shattering the graph and simulating the deterministic algorithm of~\cite{GK20} for completing a vertex coloring of the line graph (\cref{lem:simulation-overhead}).

Standard shattering usually assumes that the nodes of the graph (edges, nodes of the line graph in our case) try random colors in their palettes. In the edge-setting, this is clearly problematic as, again, each of the endpoints of an edge only partially know the palette of said edge. Fortunately, shattering is still possible if the nodes can repeatedly use a procedure that colors them with an $\Omega(1)$ probability of success that is independent of the random events that occur at a distance at least $d$ for some constant $d$ (see, e.g., Lemma~3.13 in~\cite{HKMN20_fullversion}, where the shattering technique is similarly adapted to the distance-$2$ setting in which, as in the edge-setting, the nodes do not know their palette).

\begin{lemma}
\label{lem:simulation-overhead}
    The deterministic algorithm of~\cite{GK20} for completing a vertex coloring can be simulated with a $O(\log\log n)$ overhead in the edge-setting on $O(\poly \log n)$-sized components of $O(\poly \log n)$ maximum degree and $O(\log n)$ live degree.
\end{lemma}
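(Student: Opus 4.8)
The plan is to show that one round of the deterministic \CONGEST\ algorithm of~\cite{GK20}, run on the uncolored subgraph of a given component of $L(G)$, can be emulated by $O(\log\log n)$ rounds of communication on $G$; composing with the $O(\log^2\calC\cdot\log N)$ rounds the algorithm itself takes then gives the lemma. Here $N=O(\poly\log n)$ is the component size, $\calC=2\Delta-1=O(\poly\log n)$ is the color space, so $\log\calC=O(\log\log n)$ and every message of~\cite{GK20} fits in $O(\log\log n)$ bits, and the \emph{live degree} --- the number of uncolored $L(G)$-neighbors of an uncolored edge --- is $O(\log n)$ by hypothesis. Note $O(\log^2\calC\cdot\log N)=O(\log^3\log n)$, so the final cost will be $O(\log^4\log n)$, matching the small-$\Delta$ case of the $(2\Delta-1)$-edge coloring theorem.

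The first ingredient is a cheap preprocessing step giving each uncolored edge $e=\{v_e,v'_e\}$ a valid list. We cannot let $e$ learn its whole palette $\psi_e$, which may be polylogarithmically large; but either $\calC$ is small enough that $\psi_e$ fits in $O(\log\log n)$ \CONGEST\ rounds across $e$, or $\Delta=\omega(\log n)$, in which case the line graph is $\Omega(\log n)$-sparse, \cref{prop:slack-lemma} forces $\card{\psi_e}=\Omega(\Delta)=\Omega(\calC)$, and a single call to \alg{PaletteSampling} lets both endpoints learn a common subset $\psi'_e=S_{i_e}\cap\psi_e\subseteq[\calC]$ of size $\Theta(s)=\Theta(\log n)$ w.h.p.\ in $O(1)$ rounds (\cref{prop:palette-sampling}, adapted to color space $[\calC]$). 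Choosing the representative-family parameter $s$ to be a large enough constant times $\log n$ makes $\card{\psi'_e}$ exceed the live degree of $e$, so $\psi'_e$ is a valid list. The higher-ID endpoint $v_e$, designated as the \emph{representative} of the line-graph node $e$, stores this list; from then on~\cite{GK20} is run with these lists over the common space $[\calC]$, so no relabeling of colors is needed and the coloring it produces is consistent both with uncolored neighbors and with already-colored edges (whose colors are excluded from every $\psi'_e$).

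Now fix one round, in which each uncolored edge $e$ sends an $O(\log\log n)$-bit message $m_{e\to e'}$ to each uncolored $L(G)$-neighbor $e'$. The structural observation is that the uncolored edges incident to a fixed vertex $u$ of $G$ form a clique in $L(G)$, so if there are $k$ of them each has live degree $\ge k-1$; hence every vertex of $G$ is incident to only $O(\log n)$ uncolored edges. We emulate the round in three substeps. (i) For each uncolored $e$, its representative $v_e$ has computed all outgoing messages; the $O(\log n)$ addressed to neighbors sharing the endpoint $v'_e$ total $O(\log n\cdot\log\log n)$ bits, which $v_e$ ships to $v'_e$ along $e$ in $O(\log\log n)$ rounds. (ii) After (i), each vertex $u$ of $G$ knows $m_{e\to e'}$ for every ordered pair $(e,e')$ of uncolored edges incident to $u$: if $v_e=u$ it computed the message, otherwise it received it along $e$ in step (i). (iii) For each uncolored $e$ incident to $u$, vertex $u$ bundles the $O(\log n)$ messages $\{m_{e'\to e}: e'\text{ uncolored at }u,\ e'\neq e\}$, again $O(\log n\cdot\log\log n)$ bits, and sends the bundle along $e$ in $O(\log\log n)$ rounds. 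The representative $v_e$ then holds its own side's incoming bundle and receives the other side's along $e$, hence has all of $e$'s incoming messages, runs the node's local computation, and --- if $e$ gets colored --- tells $v'_e$, which announces the newly used color to its incident edges as part of the next round. Each substep costs $O(\log\log n)$ \CONGEST\ rounds on $G$, which is the claimed overhead.

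The main obstacle is the joint effect of two features peculiar to the edge setting: a line-graph node's state lives at two different \CONGEST\ vertices, and the simulated deterministic algorithm may send a \emph{distinct} message to each neighbor, so a local broadcast does not suffice and messages must be routed through the shared endpoint that sees both participants (step (ii)). A secondary point needing care --- and the reason \alg{PaletteSampling} appears in the preprocessing --- is that an uncolored edge's palette can be too large to transmit in full, so one must instead work with a $\Theta(\log n)$-sized representative sub-palette, which keeps the preprocessing to $O(\log\log n)$ rounds while still handing every edge a list longer than its live degree. Everything else is routine packing of $O(\log n)$ messages of $O(\log\log n)$ bits into $O(\log\log n)$ \CONGEST\ words, and the final count $O(\log^3\log n)\cdot O(\log\log n)=O(\log^4\log n)$.
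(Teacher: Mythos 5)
Your proposal is correct and follows essentially the same route as the paper's proof: the same two-case preprocessing to hand each edge a list longer than its live degree (full transmission for small $\Delta$, \alg{PaletteSampling} otherwise), and the same packing of $O(\log n)$ messages of $O(\log\calC)=O(\log\log n)$ bits per simulated round, yielding the $O(\log\log n)$ overhead. You merely spell out the message-routing through shared endpoints in more detail than the paper does.
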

\begin{proof}
    Two key properties here:
    \begin{itemize}
        \item since $\Delta \in O(\poly(\log n))$, a color fits in only $O(\log \log n)$ bits.
        \item since the live degree after shattering is at most $O(\log n)$, an edge only needs to receive $O(\log n \log \log n)$ bits to receive one color from each of its neighbors, which only takes $O(\log \log n)$ rounds.
    \end{itemize}
    
    Before actually running the algorithm of~\cite{GK20}, the edges need to learn more colors in their palette than they have neighbors. This is possible in $O(\log \log n)$ rounds. Indeed:
    \begin{itemize}
        \item If the edges have maximum degree $\Delta$ at most $C' \log n$, they may simply learn all the colors used by their neighbors in $O(\log \log n)$ rounds by simple transmission,
        \item If the edges have maximum degree greater than $C' \log n$ with $C'$ a large enough universal constant, they all have sufficient slack to learn more than $C \log n > d^*_L(e)$ colors of their palette in $O(1)$ executions of \alg{PaletteSampling}.
    \end{itemize}
    
    The algorithm of~\cite{GK20} consists of $O(\log N)$ iterations of a $O(\log^2 \calC)$ procedure using messages of size $O(\log \calC)$, where $N$ is the number of nodes in the graph ($O(\poly \log n)$ in the case of our connected components) and $\calC$ is the size of the color space ($(2\Delta-1) \in O(\log^{1+1/\log^*n} n)$ in our case). We simulate this algorithm on the line graph using that $O(\log n)$ messages of size $O(\log \calC)$ can be sent on an edge in $O(\log \calC) \subseteq O(\log \log n)$ {\congest} rounds.
\end{proof}

\subsection{\texorpdfstring{$(1+\epsilon)\Delta$}{(1+ε)Δ}-edge coloring}

Dubhashi, Grable, and Panconesi \cite{DGP98} gave an algorithm for $(1+\epsilon)\Delta$-edge coloring running in $O(\log n)$ rounds of \LOCAL.
Their algorithm has two phases. In the first phase, subsets of edges try random colors from their palette. After the first phase, the maximum (uncolored) degree of a node is at most $\epsilon\Delta/2$. The second phase then applies a $(2\Delta-1)$-edge coloring algorithm with a fresh set of colors. Since the first phase uses $\Delta$ colors, the total number of colors used is at most $(1+\epsilon)\Delta$. 

The first phase runs in $O(1)$ rounds. In \cite{DGP98}, the algorithm used in the second phase runs in $O(\log n)$ rounds, and hence the time bound of their full algorithm. By using the $O(\log^* \Delta)$-round algorithm explained earlier, the total time complexity is reduced to $O(\log^* \Delta)$. What remains is to explain how to implement the first phase in the {\CONGEST} model.

The first phase consists of $t_\epsilon = O(1)$ iterations, where iteration $i$ consists of the following steps: 
Each vertex $u$ randomly selects an $\epsilon/2$-fraction of the edges incident on itself. An edge is considered selected if either of its endpoints select it. Each selected edge $e$ chooses independently at random a tentative color $t(e)$ from its palette (of currently available colors). An edge is assigned its tentative color if no adjacent edge also chose the same tentative color, and the palettes of the edges are updated accordingly.

The only difference in the random selection of the first phase is that only a subset of the edges pick tentative colors. This is easily performed identically in {\CONGEST}. The only issue is then how to pick a random color from within the current palette of the edge. 
We show here how to achieve this approximately using representative sets.

\begin{proposition}
\label{prop:uniform-sampling}
Let an edge $e$ have a palette $\psi_e$ of size $\card{\psi_e}\geq \delta(1+\epsilon)\Delta$. Then in an execution of \alg{PaletteSampling} followed by the edge trying a single color in the sampled palette, conditioned on an event occurring w.h.p., each color of $\psi_e$ gets sampled with a probability in $\range*{\frac{1-\alpha}{1+\alpha},\frac{1+\alpha}{1-\alpha}}\cdot\frac 1 {\card{\psi_e}}$.
\end{proposition}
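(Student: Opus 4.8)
The plan is to track a single fixed color $c \in \psi_e$ through the two-stage sampling process and show its selection probability is the product of (i) the probability that $c$ lands in the sampled sub-palette $S_{i_e} \cap \psi_e$ and (ii) the conditional probability that the final uniform single-color trial picks $c$ out of that sub-palette, and then bound each factor using the representative-set guarantees. For factor (i), the relevant event is $c \in S_{i_e}$, and Equation~\ref{eq:rep-set-single-elem} in Definition~\ref{def:representative-sets} gives $\Pr_{i_e}[c \in S_{i_e}] \in [1-\alpha, 1+\alpha]\cdot \frac{s\cdot t}{k}$ with $k=(1+\epsilon)\Delta$; note this holds for \emph{every} color uniformly, which is the source of the symmetry we need. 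For factor (ii), once $S_{i_e}$ is fixed and $c \in S_{i_e}\cap\psi_e$, the edge picks uniformly at random among the $\card{S_{i_e}\cap\psi_e}$ colors of the sampled palette, so $c$ is chosen with probability exactly $1/\card{S_{i_e}\cap\psi_e}$.

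The key step is to control $\card{S_{i_e}\cap\psi_e}$. Since $\card{\psi_e}\geq \delta(1+\epsilon)\Delta$, Equation~\ref{eq:rep-set-large-T} (equivalently Proposition~\ref{prop:palette-sampling}) tells us that, conditioned on a high-probability event, $\card{S_{i_e}\cap\psi_e} \in [1-\alpha,1+\alpha]\cdot s\cdot\frac{\card{\psi_e}}{(1+\epsilon)\Delta}$. Combining: for $c$ to be sampled we need both $c\in S_{i_e}$ and then $c$ picked, so
\[
\Pr[c \text{ sampled}] \;=\; \sum_{i : c\in S_i} \frac{1}{t}\cdot\frac{1}{\card{S_i\cap\psi_e}}\ ,
\]
and on the conditioning event every term has $\card{S_i\cap\psi_e}\in[1-\alpha,1+\alpha]\,s\,\card{\psi_e}/((1+\epsilon)\Delta)$, while the number of indices $i$ with $c\in S_i$ is, by Equation~\ref{eq:rep-set-single-elem}, in $[1-\alpha,1+\alpha]\cdot s t/((1+\epsilon)\Delta)$. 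Multiplying the bounds and dividing by $t$, the factors of $s$, $t$, and $(1+\epsilon)\Delta$ cancel, leaving $\Pr[c\text{ sampled}] \in \left[\frac{1-\alpha}{1+\alpha}, \frac{1+\alpha}{1-\alpha}\right]\cdot\frac{1}{\card{\psi_e}}$, as claimed.

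The main subtlety — and the place to be careful — is that the two representative-set guarantees being invoked (the per-element bound \ref{eq:rep-set-single-elem} and the large-$T$ intersection bound \ref{eq:rep-set-large-T} applied to $T=\psi_e$) interact through the \emph{same} random index $i_e$: I should not treat "the count of good indices" and "the size of $S_{i_e}\cap\psi_e$" as independent. The clean way around this is exactly the summation form above, where I only ever use \ref{eq:rep-set-single-elem} to count the indices $i$ with $c\in S_i$ and use \ref{eq:rep-set-large-T} to argue that, after discarding a $\nu$-fraction of bad indices (which only perturbs constants and can be absorbed into the high-probability conditioning, recalling $\nu = n^{-3}$), \emph{all} surviving indices $i$ have $\card{S_i\cap\psi_e}$ in the stated range. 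One should also double-check that removing the $\nu t$ bad indices from the sum does not distort the per-element count beyond the $[1-\alpha,1+\alpha]$ slack already present; since $s t/k \gg \nu t$ for the parameter regime in play, this is a lower-order correction, but it is worth one line. Everything else is the routine cancellation sketched above.
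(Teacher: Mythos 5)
Your proposal is correct and follows essentially the same two-factor decomposition the paper uses: bound $\Pr[c\in S_{i_e}]$ via Equation~\ref{eq:rep-set-single-elem} and bound $\card{S_{i_e}\cap\psi_e}$ via Equation~\ref{eq:rep-set-large-T}, then multiply. The explicit summation $\sum_{i:c\in S_i}\frac{1}{t}\cdot\frac{1}{\card{S_i\cap\psi_e}}$ and the remark about the two guarantees being coupled through the same random index $i_e$ make rigorous a step the paper passes over informally, but the underlying argument is identical.
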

\begin{proof}
    Let us consider $c \in \psi_e$, a color in $e$'s palette. By Equation \ref{eq:rep-set-single-elem}, the probability that $c$ is in the random representative set $S_{i_e}$ used in \alg{PaletteSampling} is between $\frac{1-\alpha}{(1+\epsilon)\Delta}s$ and $\frac{1+\alpha}{(1+\epsilon)\Delta}s$. Conditioned on $c \in S_{i_e}$, the probability that $c$ is the color that gets picked among the sampled palette colors is $1/\card{S_{i_e} \cap \psi_e}$. When $\card{\psi_e}\geq\delta(1+\epsilon)\Delta$, with probability $\geq 1-\nu$, $\card{S_{i_e} \cap \psi_e} \in [1-\alpha,1+\alpha]\frac{\card{\psi_e}}{(1+\epsilon)\Delta}s$. So conditioned on $\card{S_{i_e} \cap \psi_e}$ being of the  expected order of magnitude, $c$ gets sampled with probability between $\frac{1-\alpha}{1+\alpha}\cdot\frac 1 {\card{\psi_e}}$ and $\frac{1+\alpha}{1-\alpha}\cdot\frac 1 {\card{\psi_e}}$.
\end{proof}

Theorem 5 of \cite{DGP98} shows that the palette sizes and degree of nodes are highly concentrated after each iteration. In particular, each edge has palette of size $\sim (1-p_\epsilon)^{2i}\Delta$, where $p_\epsilon$ is a function of $\epsilon$ alone. 
Thus, in each iteration of phase I, the current palette of each vertex is a constant fraction of $[\Delta]$, and hence \cref{prop:uniform-sampling} applies.

\section{Other applications}
\label{sec:applications} 

\newcommand{\topadA}{\sqrt{\log^{1+1/\log^*n}n}}
\newcommand{\padA}{\vphantom{\topadA^1_\gamma}}
\newcommand{\topadB}{{\log^4\log^3}}
\newcommand{\padB}{\vphantom{\topadB^1_\gamma}}
\newcommand{\topadC}{{\log^{(c)^1}}}
\newcommand{\padC}{\vphantom{\topadC_\gamma}}

\begin{table}[ht]
    \centering
    \begin{tabularx}{\textwidth}{|C{0.95}|C{1.45}|C{0.6}|}
        \hline
        Degree & Tasks & Complexity in {\congest} \\ \hline
        \multirow{2}{*}{$\Delta=\Omega(\log^{1+1/\log^*n}n)$} & $(1+\epsilon)\Delta$-vertex coloring & \multirow{2}{*}{$O(\log^*n)$} \\
        & $(1+\epsilon)\Delta$-edge coloring & \\ \hline
        \multirow{2}{*}{$\Delta=O(\log^{1+1/\log^*n}n)$} & $(1+\epsilon)\Delta$-vertex coloring &  $\padB O(\log^3 \log n)$ \\ \cline{2-3}
        & $(2\Delta-1)$-edge coloring & $\padB O(\log^4 \log n)$ \\ \hline
        $\padA\Delta=\Omega(\sqrt{\log^{1+1/\log^*n}n})$ & \multirow{2}{*}{$(1+\epsilon)\Delta^2$-vertex distance-$2$ coloring} & $O(\log^* n)$\\ \cline{1-1}\cline{3-3}
        $\padA\Delta=O(\sqrt{\log^{1+1/\log^*n}n})$ & & $O(\log^4 \log n)$ \\\hline
        \multirow{2}{*}{$\Delta=\Omega(\log^{1+1/c'}n)$} & $\padC \Delta\log^{(c)}$-vertex coloring & \multirow{3}{*}{$O(1)$} \\
        & $\padC \Delta\log^{(c)}$-edge coloring & \\ \cline{1-2}
        $\padA\Delta=\Omega(\sqrt{\log^{1+1/c'}n})$ & $\Delta^2\log^{(c)}n$-vertex distance-$2$ coloring & \\ \hline
    \end{tabularx}
    \caption{Summary of our results. $\log^{(c)}$ is the $c$-iterated logarithm,  $c$ and $c'$ are constants. Note that an algorithm using $(1+\epsilon)\Delta$ colors implies one using $(2\Delta-1)$, which itself implies one using $(2+\epsilon)\Delta$. The vertex coloring results for large $\Delta$ imply equivalent results with less colors on locally sparse graphs through \cref{prop:slack-lemma}. 
    }
    \label{tab:summary}
\end{table}

Our sampling technique yields a few other interesting results.

The first results are based on the observations that with slack of $\Delta \log^{(c)} n$ (where $\log^{(c)} n$ is the $c$-iterated logarithm), it suffices to run \alg{MultiTrial} for $O(c)$ rounds to reduce the uncolored degree to $O(\log n)$, and that when $\Delta = \Omega(\log^{1+1/c'}n)$, if suffices to run \alg{MultiTrial} for $O(c')$ rounds to color all remaining nodes in the last phase of our algorithms.

\begin{theorem}
\label{thm:delta-log-colors}
$O(\Delta\log n)$-vertex coloring can be done in a single \CONGEST round, w.h.p. 
$(\Delta\log^{(c)} n)$-vertex coloring can be done in $O(1)$ \CONGEST rounds, w.h.p., for any constants $c$ and $c'$, when $\Delta = \Omega(\log^{1+1/c'} n)$. 
\end{theorem}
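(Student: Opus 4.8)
\emph{Overall plan.} I would run the \alg{MultiTrials}/representative-set machinery of \cref{sec:d1-coloring} essentially unchanged, replacing the color space $[(1+\epsilon)\Delta]$ by $[C\Delta\log n]$ for the first statement ($C$ a large constant) and by $[\Delta\log^{(c)}n]$ for the second, and only changing the \emph{schedule} of how many colors each node tries per round. The crucial point is that \cref{lem:multitrials-success-prob,lem:multitrial-next-phase,lem:multitrial-clean-up} are insensitive to the size $k$ of the color space: their statements and proofs go through verbatim for a $(\tfrac12,\delta,n^{-3})$-representative family over $[k]$, with $\delta$ a small universal constant, as long as every node's palette has size $\Omega(k)$ --- which always holds below, since a node has at most $\Delta$ neighbors, a vanishing fraction of $k$. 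The family is computed by local computation with no communication.

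\emph{First statement.} Pick $C$ large and $\delta$ small enough, and run \alg{MultiTrials}$(c_0\log n)$ once for a constant $c_0$ with $c_0\log n\le\card{S_{i_v}\cap\psi_v}$ (w.h.p.) and $c_0\log n\le(C-1)\log n/2$. Initially $\card{\psi_v}=C\Delta\log n$ and $d^*(v)=d(v)\le\Delta$, so the slack is $\ge(C-1)\Delta\log n\gg\Delta\ge d^*(v)$ and the hypothesis $x\le s(v)/2d^*(v)$ of \cref{lem:multitrials-success-prob} holds with $x=c_0\log n$. That lemma colors every node with probability $\ge 1-2n^{-3}-2^{-c_0\log n/4}\ge 1-n^{-2}$, \emph{conditioned on any choices of its neighbors}, so a union bound over the $n$ nodes finishes. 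A single execution of \alg{MultiTrials} transmits $i_v$, an $s$-bit indicator, and one color, i.e.\ $O(\log n)$ bits --- a single round under the usual convention that a \CONGEST message carries $\Theta(\log n)$ bits with a large enough constant.

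\emph{Second statement.} I would combine two short ramp-ups, matching the two observations stated just before the theorem. \textbf{(I)} With $k=\Delta\log^{(c)}n$ colors one starts with $\card{\psi_v}=k$, $d^*(v)\le\Delta$, hence slack $\ge\Delta(\log^{(c)}n-1)$ and ratio $s(v)/d^*(v)\ge\log^{(c)}n-1$. Run Step~\ref{step:d1-multitrials} of \cref{alg:d1-coloring} --- $O(1)$ rounds of \alg{MultiTrials}$(2\knuthupuparrow i)$ for $i=i_0,i_0+1,\dots$ --- but starting from $i_0$ the largest index with $2\knuthupuparrow i_0\le(\log^{(c)}n-1)/4$, which is $i_0=\log^*n-c-O(1)$. \cref{lem:multitrial-next-phase} applies at phase $i_0$ (its two hypotheses reduce to $d^*(v)\le\Delta$ and $\Delta=\Omega(\log^{1+1/c'}n)\ge C_c\log n$), and iterating it the ratio squares-the-tower each round until $2\knuthupuparrow i>s(v)/2C_c\log n$, whereupon $d^*(v)\le C_c\log n$; since $s(v)\le k<n\log n$ the terminal index is $\le\log^*n$, so Phase~(I) takes $\log^*n-i_0=O(c)$ rounds. (When the scheduled $2\knuthupuparrow i$ exceeds the $\Theta(\log n)$ colors available in $S_{i_v}\cap\psi_v$, the node simply tries all of them; this only lowers failure probabilities, so the analysis is unchanged, reading $x$ as $\min(2\knuthupuparrow i,\card{S_{i_v}\cap\psi_v})$.) \textbf{(II)} Now $d^*(v)\le C_c\log n$ with slack still $\ge\Delta(\log^{(c)}n-1)\ge\Delta/2$. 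As in the proof of \cref{lem:multitrial-clean-up}, $O(1)$ rounds of \alg{MultiTrials}$(x)$ with $x=\lfloor s(v)/2d^*(v)\rfloor$ capped at $\Theta(\log n)$ shrink the uncolored-degree bound from $D$ to $\Theta(D\log n/\Delta)\le\Theta(D/\log^{1/c'}n)$, because a set of $q$ neighbors of $v$ survives with probability $\le (d^*(v))^q\,2^{-\Omega(q\min(x,\log n))}=n^{-\Omega(1)}$ once $q=\Theta(\log n/x)$ (using $x\gg\log\log n\ge\log d^*(v)$). Starting from $C_c\log n$, after $O(c')$ such rounds the bound falls below $\Theta(\Delta/\log n)$, where $x$ can be taken to be $\Theta(\log n)$ and one last \alg{MultiTrials} round colors everything w.h.p., exactly as in the first statement. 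In total $O(c)+O(c')+O(1)=O(1)$ rounds.

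\emph{Main obstacle.} The arguments are mostly bookkeeping given \cref{sec:d1-coloring}; what most needs care is the iterated-logarithm/tetration arithmetic in the second statement --- confirming that a starting ratio $\log^{(c)}n$ sits exactly $c$ tower-levels above the starting ratio $1$ of \cref{alg:d1-coloring}, so that Phase~(I) costs only $O(c)$ rounds, and that the hypothesis $\Delta=\Omega(\log^{1+1/c'}n)$ is precisely what forces each clean-up round to shrink the uncolored degree by a factor $\Omega(\log^{1/c'}n)$, so that Phase~(II) costs $O(c')$ rather than $\Theta(c'\log^*n)$ rounds. A secondary point is verifying that \cref{lem:multitrials-success-prob,lem:multitrial-next-phase,lem:multitrial-clean-up} really are color-space-agnostic and that the capping of $x$ at $\card{S_{i_v}\cap\psi_v}$ is harmless.
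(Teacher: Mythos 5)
Your proposal is correct and follows the same outline as the paper's (very terse) treatment: the paper simply states as observations that $\Delta\log^{(c)}n$ colors give a starting slack-to-degree ratio of $\log^{(c)}n$, placing the node about $c$ tower-levels from the top so that the \alg{MultiTrials} ramp-up of \cref{alg:d1-coloring} terminates in $O(c)$ rounds, and that $\Delta=\Omega(\log^{1+1/c'}n)$ makes each clean-up round cut the uncolored degree by a $\log^{1/c'}n$ factor so the tail takes $O(c')$ rounds. Your filled-in version --- swapping the color space in the representative family, checking that \cref{lem:multitrials-success-prob,lem:multitrial-next-phase,lem:multitrial-clean-up} only need the palette to be a constant fraction of $k$, the single-round derivation for $O(\Delta\log n)$ colors, and the bookkeeping of $i_0\approx\log^*n-c$ and the $O(c')$ clean-up iterations --- is exactly the intended argument.
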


The \alg{PaletteSampling} and \alg{MultiTrials} procedures are easily adapted to the distance-2 setting, which immediately yields analogue results in this setting. 

\begin{theorem}
\label{thm:delta-log-colors-dist-2}
Distance-2 coloring with $\Delta^2 \log^{(c)} n$ colors can be done in $O(1)$ {\congest} rounds, for any constants $c$ and $c'$, when $\Delta = \Omega(\sqrt{\log^{1+1/c'}n})$, w.h.p.
Distance-2 coloring a graph $G$ with $\Delta^2+1$ colors where $G^2$ is  $(1-\epsilon)$-locally sparse, can be achieved in $O(\log^* n)$ rounds, w.h.p., for $\Delta = \Omega(\sqrt{\log^{1+1/\log^*n}n})$.
\end{theorem}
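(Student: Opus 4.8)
The plan starts from the observation that a distance-$2$ coloring of $G$ is exactly a proper vertex coloring of the square graph $G^2$, whose maximum degree $\Delta_2$ satisfies $\Delta_2\le\Delta^2$, and proceeds by running the vertex-coloring machinery of Section~\ref{sec:d1-coloring} on $G^2$ — with color space $U=[\Delta^2\log^{(c)}n]$ for the first statement and $U=[\Delta^2+1]$ for the second, every node holding a common representative family over $U$ with $\alpha=\tfrac12$, $\nu=n^{-3}$ and a suitable constant $\delta$ (available from \cref{lem:representative-sets}, since $\log|U|=O(\log n)$). The one genuinely new ingredient, just as for edge coloring, is that a node $v$ does not know its palette in $G^2$: a color is blocked for $v$ iff some node within distance $2$ of $v$ uses it, and that knowledge sits with $v$'s neighbors. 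So the first task is to port \alg{PaletteSampling} and \alg{MultiTrials} to the distance-$2$ setting, keeping each call to $O(1)$ \CONGEST rounds.

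For \alg{PaletteSampling}: $v$ draws $i_v\in[t]$ and broadcasts it; each neighbor $u$ replies with the $s$-bit indicator of which colors of $S_{i_v}$ are used by $u$ or by a neighbor of $u$; unioning these blocked sets gives $v$ exactly $S_{i_v}\cap\psi_v$. Since $S_{i_v}$ is drawn independently of $\psi_v$ and $|\psi_v|$ is always a constant fraction of $|U|$ (at least $\Delta^2(\log^{(c)}n-1)$, resp.\ at least the generated slack $\Omega(\epsilon\Delta^2)$, against $|U|\le\Delta^2\log^{(c)}n$, resp.\ $\Delta^2+1$), \cref{eq:rep-set-large-T} gives $|S_{i_v}\cap\psi_v|=\Omega(\log n)$ w.h.p., matching \cref{prop:palette-sampling}. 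For \alg{MultiTrials}$(x)$: after \alg{PaletteSampling}, $v$ picks $X_v\subseteq S_{i_v}\cap\psi_v$ of size $x$ and sends its $\le s$-bit description to each neighbor; then each neighbor $u$ of $v$ sends back the $x$-bit verdict flagging which colors of $X_v$ collide with something visible from $u$ — a color used by $u$ or a neighbor of $u$, or tried by $u$ or by a neighbor of $u$ other than $v$; finally $v$ adopts a color flagged by no neighbor, if one exists, and a constant number of further propagation rounds restore, at every node, knowledge of the colors used within distance $2$. Every message here is $O(\log n)$ bits and every node exchanges $O(\log n)$ bits per incident edge, so a call costs $O(1)$ rounds regardless of degree. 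The success analysis is then \cref{lem:multitrials-success-prob} verbatim: writing $T_\good$ for the colors free within distance $2$, we have $|T_\good|\ge|\psi_v|/2$ whenever $x\le s(v)/(2d^*_{G^2}(v))$, so $S_{i_v}\cap T_\good$ is a constant fraction of $S_{i_v}\cap\psi_v$ w.h.p.\ and each of the $x$ tries lands in $T_\good$ with probability $\Omega(1)$ independently, whence $v$ stays uncolored with probability $2^{-\Omega(x)}$ even conditioned on all other nodes' choices. Thus \cref{lem:multitrial-next-phase,lem:multitrial-clean-up} carry over to $G^2$ with $\Delta_2$ in place of $\Delta$.

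With these building blocks both statements reduce to round counting, exactly as for \cref{thm:delta-log-colors} and \cref{thm:d1-slack-coloring}. For the first statement, every node starts with slack $|U|-d^*_{G^2}(v)\ge\Delta^2(\log^{(c)}n-1)$, i.e.\ ratio $s(v)/d^*_{G^2}(v)=\Omega(\log^{(c)}n)$; running \alg{MultiTrials} with the tetrating budget of Step~\ref{step:d1-multitrials} from this ratio is like entering the iteration of \cref{lem:multitrial-next-phase} already at level $\log^*n-O(c)$, so $O(c)$ rounds bring $d^*_{G^2}$ down to $O(\log n)$; then, since $\Delta^2=\Omega(\log^{1+1/c'}n)$ and $|U|=\Omega(\Delta^2)$, the clean-up of \cref{lem:multitrial-clean-up} with $\log^{i/c'}n$ replacing $\log^{i/\log^*n}n$ finishes in $O(c')$ rounds (its last round tries $\Omega(\log n)$ colors per node), for a total of $O(c)+O(c')=O(1)$ rounds. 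For the second statement there is no slack a priori, so the algorithm first has every node try one random color on $G^2$ (checked in $O(1)$ rounds by the same relaying); since $G^2$ is $(1-\epsilon)$-locally sparse and $\Delta^2=\Omega(\log^{1+1/\log^*n}n)=\Omega(\log n)$, \cref{prop:slack-lemma} gives every node slack $\Omega(\epsilon\Delta^2)$ w.h.p., and from there the situation is that of \cref{thm:d1-slack-coloring} with $(G^2,\Delta_2,\Delta^2{+}1)$ in place of $(G,\Delta,(1{+}\epsilon)\Delta)$ — a constant slack-to-palette ratio together with $\Delta^2=\Omega(\log^{1+1/\log^*n}n)$ — so Steps~\ref{step:d1-multitrials} and~\ref{step:d1-cleanup} take $O(\log^*n)$ rounds.

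The step I expect to be the real obstacle is the claim that checking $\Theta(\log n)$ simultaneously tried colors for distance-$2$ conflicts costs only $O(1)$ \CONGEST rounds: the key is that an intermediate node $u$ must never forward the (possibly $\Theta(\Delta\log n)$-bit) list of colors it has seen, but only an $O(\log n)$-bit verdict on each incident neighbor's tentative set — affordable precisely because each tentative set lies inside a representative set and is thus described in $O(\log n)$ bits. The remaining points — that $|\psi_v|$ and $|T_\good|$ stay constant fractions of $|U|$ on $G^2$, and that local sparsity of $G^2$ feeds \cref{prop:slack-lemma} after handling low-degree nodes in the standard way — are routine given Sections~\ref{sec:trick}--\ref{sec:d1-coloring}.
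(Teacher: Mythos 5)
Your proposal is correct and follows essentially the same route as the paper: reduce distance-2 coloring to vertex coloring of $G^2$, adapt \alg{PaletteSampling} and \alg{MultiTrials} so that a node learns $S_{i_v}\cap\psi^2_v$ and the conflict verdicts via $O(\log n)$-bit exchanges with each direct neighbor (the paper phrases this as $S_{i_v}\cap\psi^2_v=\bigcap_{u\in N(v)}S_{i_v}\cap\psi^1_u$), and then reuse the round-counting of \cref{thm:d1-slack-coloring} and \cref{thm:delta-log-colors} with $\Delta^2$ in place of $\Delta$. The paper's own argument is only a short sketch, so your write-up actually supplies more of the details (notably the $O(1)$-round conflict-checking of $\Theta(\log n)$ tried colors through intermediate nodes) than the original does.
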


Indeed, let $\psi^k_v$ denote the set of colors unused in $v$'s distance-$k$ neighborhood. \alg{PaletteSampling} can be done by having each $v$ send $S_{i_v}$ and receive $S_{i_v} \cap \psi^1_u$ to and from each direct neighbor $u$, from which $v$ computes $S_{i_v}\cap \psi^2_v = \bigcap_{u\in N(v)} S_{i_v} \cap \psi^1_u$. \alg{MultiTrials} is similarly easily adapted following the same principle.

\section{Explicit representative sets}

Our proof of \cref{lem:representative-sets} -- the existence of representative sets with appropriate parameters -- is non-constructive, and a natural question is whether we could find an explicit construction with similar parameters. We partially answer this question by remarking that an averaging sampler essentially has all the properties we want, bar one, and known explicit constructions based on expander graphs give the right guarantees (see Theorem 1.3 in~\cite{Healy08}). The output of an averaging sampler is not a set but a multiset (i.e., some elements might appear more than once), but it satisfies properties~\ref{eq:rep-set-large-T} and~\ref{eq:rep-set-small-T} of our definition of representative sets (\cref{def:representative-sets}), which implies that most of results may be obtained with an explicit construction. The notable exception is our result for $(1+\epsilon)\Delta$-edge coloring, which relies on the uniformity of sampling single elements through representative sets (property~\ref{eq:rep-set-single-elem}) which does not seem to immediately hold for the explicit construction mentioned here. A weaker analogue of this property may be proved using the Hitting property of expander walks (Theorem~4.17 in~\cite{Vadhan12}), but how to construct an explicit family of representative sets with the exact properties of \cref{def:representative-sets} and \cref{lem:representative-sets} is an open question.

\section{Conclusions}

We have presented a new technique, inspired by communication complexity, for speeding up \CONGEST algorithms. We have applied it to a range of coloring problems (see \cref{tab:summary} for a summary of our results), but it would be interesting to see it used more widely, possibly with extensions.

We obtained a superfast algorithm in \CONGEST for $(1+\epsilon)\Delta$-edge coloring that holds when $\Delta = \Omega(\log^{1+1/\log^*n} n)$.
It remains to be examined how to deal with smaller values of $\Delta$, which in \LOCAL has been tackled via the  Lov\'asz Local Lemma \cite{EPS15}.

\bibliography{refs}

\begin{thebibliography}{10}

\bibitem{BEPS16}
L.~Barenboim, M.~Elkin, S.~Pettie, and J.~Schneider.
\newblock The locality of distributed symmetry breaking.
\newblock {\em J. {ACM}}, 63(3):20:1--20:45, 2016.

\bibitem{CLP20}
Y.-J. Chang, W.~Li, and S.~Pettie.
\newblock Distributed {($\Delta+1$)}-coloring via ultrafast graph shattering.
\newblock {\em SIAM Journal on Computing}, 49(3):497--539, 2020.

\bibitem{DGP98}
D.~Dubhashi, D.~A. Grable, and A.~Panconesi.
\newblock Near-optimal, distributed edge colouring via the nibble method.
\newblock {\em Theoretical Computer Science}, 203(2):225--252, 1998.

\bibitem{DP09}
D.~P. Dubhashi and A.~Panconesi.
\newblock {\em Concentration of Measure for the Analysis of Randomized
  Algorithms}.
\newblock Cambridge University Press, 2009.

\bibitem{DubhashiRanjan98}
D.~P. Dubhashi and D.~Ranjan.
\newblock Balls and bins: {A} study in negative dependence.
\newblock {\em Random Struct. Algorithms}, 13(2):99--124, 1998.

\bibitem{EPS15}
M.~Elkin, S.~Pettie, and H.-H. Su.
\newblock {($2\Delta-1$)}-edge-coloring is much easier than maximal matching in
  the distributed setting.
\newblock In {\em {Proc. ACM-SIAM Symp. on Discrete Algorithms (SODA)}}, pages
  355--370, 2015.

\bibitem{FGK17}
M.~Fischer, M.~Ghaffari, and F.~Kuhn.
\newblock Deterministic distributed edge-coloring via hypergraph maximal
  matching.
\newblock In {\em {Proc. IEEE Symp. on Foundations of Computer Science
  (FOCS)}}, pages 180--191, 2017.

\bibitem{GGR20}
M.~Ghaffari, C.~Grunau, and V.~Rozhoň.
\newblock Improved deterministic network decomposition.
\newblock In {\em {Proc. ACM-SIAM Symp. on Discrete Algorithms (SODA)}}, 2021.

\bibitem{GK20}
M.~Ghaffari and F.~Kuhn.
\newblock Deterministic distributed vertex coloring: Simpler, faster, and
  without network decomposition.
\newblock {\em CoRR}, abs/2011.04511, 2020.

\bibitem{GP97}
D.~A. Grable and A.~Panconesi.
\newblock Nearly optimal distributed edge coloring in {$O(\log\log n)$} rounds.
\newblock {\em Random Structures \& Algorithms}, 10(3):385--405, 1997.

\bibitem{HKMN20_fullversion}
M.~M. Halld\'orsson, F.~Kuhn, Y.~Maus, and A.~Nolin.
\newblock Coloring fast without learning your neighbors' colors.
\newblock {\em CoRR}, abs/2008.04303, 2020.
\newblock (full version of~\cite{HKMN20}).

\bibitem{HKMN20}
M.~M. Halld\'orsson, F.~Kuhn, Y.~Maus, and A.~Nolin.
\newblock Coloring fast without learning your neighbors' colors.
\newblock In {\em {Proc. Int. Symp. on Distributed Computing (DISC)}}, pages
  39:1--39:17, 2020.

\bibitem{HSS16}
D.~G. Harris, J.~Schneider, and H.-H. Su.
\newblock Distributed {$(\Delta+ 1)$}-coloring in sublogarithmic rounds.
\newblock In {\em {Proc. ACM SIGACT Symp. on Theory of Computing (STOC)}},
  pages 465--478, 2016.

\bibitem{HastadW07}
J.~H{\aa}stad and A.~Wigderson.
\newblock The randomized communication complexity of set disjointness.
\newblock {\em Theory of Computing}, 3(11):211--219, 2007.

\bibitem{Healy08}
A.~Healy.
\newblock Randomness-efficient sampling within {NC}1.
\newblock {\em Computational Complexity}, 17:3--37, 2008.

\bibitem{johansson99}
{\"{O}}.~Johansson.
\newblock Simple distributed {$\Delta+1$}-coloring of graphs.
\newblock {\em Inf. Process. Lett.}, 70(5):229--232, 1999.

\bibitem{KN97}
E.~Kushilevitz and N.~Nisan.
\newblock {\em Communication complexity}.
\newblock Cambridge University Press, 1997.

\bibitem{linial92}
N.~Linial.
\newblock Locality in distributed graph algorithms.
\newblock {\em SIAM Journal on Computing}, 21(1):193--201, 1992.

\bibitem{Newman91}
I.~Newman.
\newblock Private vs. common random bits in communication complexity.
\newblock {\em Inf. Process. Lett.}, 39(2):67--71, 1991.

\bibitem{PS97}
A.~Panconesi and A.~Srinivasan.
\newblock Randomized distributed edge coloring via an extension of the
  {C}hernoff--{H}oeffding bounds.
\newblock {\em SIAM Journal on Computing}, 26(2):350--368, 1997.

\bibitem{RY2020}
A.~Rao and A.~Yehudayoff.
\newblock {\em Communication Complexity: and Applications}.
\newblock Cambridge University Press, 2020.

\bibitem{Reed98}
B.~A. Reed.
\newblock {\(\omega\)}, {\(\Delta\)}, and {\(\chi\)}.
\newblock {\em J. Graph Theory}, 27(4):177--212, 1998.

\bibitem{RG19}
V.~Rozho\v{n} and M.~Ghaffari.
\newblock Polylogarithmic-time deterministic network decomposition and
  distributed derandomization.
\newblock In {\em {Proc. ACM SIGACT Symp. on Theory of Computing (STOC)}},
  pages 350--363, 2020.

\bibitem{SW10}
J.~Schneider and R.~Wattenhofer.
\newblock A new technique for distributed symmetry breaking.
\newblock In {\em {Proc. ACM Symp. Principles of Distributed Computing
  (PODC)}}, pages 257--266, 2010.

\bibitem{Vadhan12}
S.~P. Vadhan.
\newblock Pseudorandomness.
\newblock {\em Foundations and Trends in Theoretical Computer Science},
  7(1-3):1--336, 2012.

\end{thebibliography}

\end{document}